\documentclass[english]{article}
\usepackage[T1]{fontenc}
\usepackage[latin9]{inputenc}
\usepackage{geometry}
\geometry{verbose,tmargin=2.5cm,bmargin=2.5cm,lmargin=2.5cm,rmargin=2.5cm}
\usepackage{float}
\usepackage{mathtools}
\usepackage{bm}
\usepackage{amsmath}
\usepackage{amssymb}
\usepackage{graphicx}
\usepackage{babel}
\usepackage{amsthm}
\usepackage{color}
\newtheorem{theorem}{Theorem}

\newtheorem{lemma}[theorem]{Lemma}

\newcommand{\rem}[1]{}
\newcommand{\R}{\mathbb{R}}

\newcommand{\x}{\bm{x}}

\begin{document}
\title{The Harmonic Lagrange Top and the Confluent Heun Equation}
\author{Sean R.~Dawson,  Holger R.~Dullin, Diana M.H.~Nguyen}
	\date{}
\maketitle

\begin{abstract}
The harmonic Lagrange top is the Lagrange top plus a quadratic (harmonic) potential term. 
We describe the top in the space fixed frame using a global description with a Poisson structure on $T^*S^3$.
This global description naturally leads to a rational parametrisation of the set of critical values of the energy-momentum map.
We show that there are 4 different topological types for generic parameter values. 
The quantum mechanics of the harmonic Lagrange top is described by the most general confluent Heun equation (also known as the generalised spheroidal wave equation).  
We derive formulas for an infinite pentadiagonal symmetric matrix representing the Hamiltonian from which the spectrum is computed.
\end{abstract}
\vspace{1ex}
\centerline{Dedicated to the memory of Alexey Borisov.}

\section{Introduction}

The Lagrange top is a prime example of classical mechanics. Over centuries, it has been studied starting with Euler and Lagrange, and interest in its various features is blossoming again and again. Almost every modern development in mechanics has lead to new insights about the Lagrange top.
Before we attempt to describe the place of the Lagrange top in mechanics in the remainder of this introduction, let us formulate our main observation: The quantum mechanics of the harmonic Lagrange top is described by the most general confluent Heun equation (also known as the generalised spheroidal wave equation). By harmonic Lagrange top we mean the Lagrange top with an added harmonic (i.e. quadratic) potential. It provides an example of the subcritical and the supercritical Hopf bifurcation and its quantisation. The bulk of the paper is devoted to the description of the classical integrable system.

Rigid body dynamics is treated in most mechanics textbooks, e.g.~\cite{Whittaker37,LanLif84,Arnold78,Gold80,MarsdenRat94}.
Of the books devoted specifically to rigid body dynamics we highlight the monumental volumes of Klein \& Sommerfeld \cite{KS10} 
and the recent addition by 
Borisov \& Mamayev \cite{Borisov18}.
Many special cases of rigid body dynamics including the Lagrange top are completely integrable Hamiltonian systems, and as such have been studied in detail in
Bolsinov \& Fomenko \cite{BolFom04} and Cushman \& Bates \cite{CushmanBates15}. For all the references we inevitably missed in this introduction we refer to the extensive bibliography in \cite{Borisov18}.

In modern mechanics the (energy)-momentum map plays a central role.  Singularity theories' swallowtail was found as the set of critical values of the energy-momentum map of the Lagrange top in 
\cite{Cushman85}, also see \cite{CushmanBates15}. 
The meaning of the swallowtail from the point of view of bifurcation theory, specifically the subcritical Hopf bifurcation in the Lagrange top was described in
\cite{cushman90}. 
The fact that the swallowtail may make the image of the energy-momentum map non-simply connected is the essential observation that explains why it does not possess global action variables
\cite{Duistermaat80,CushDuist88}.
Hamiltonian monodromy of the Lagrange top is described in \cite{vivolo2003monodromy}.
Integrable discretisations of the integrable Lagrange top were found in \cite{bobenko99}.
The complex algebraic geometry of the Lagrange top was described in \cite{Gavrilov98}, and its bi-Hamiltonian structure in \cite{Tsiganov08}.
In KAM theory perturbations of the Lagrange top give a beautiful example worked out in detail in \cite{Hanssmann06}.

The quantisation of the symmetric top was first done in the early days of quantum mechanics
\cite{Reiche26quantelung}
and leads to a hypergeometric equation,
also see 
\cite{landau13quantum}.
The study of polar molecules in an electric field leads to a Hamiltonian that is equivalent to the Lagrange top. In the physics literature this is referred to as the Stark effect, and was first studied in
\cite{Schlier55}.
Matrix elements for the numerical computation of the spectrum were given in 
\cite{Shirley63stark}, and nearly 30 years later again in
\cite{Hajnal91stark}.

The discovery of quantum monodromy \cite{Duistermaat80} was in the smaller brother of the Lagrange top, the spherical pendulum, in 
\cite{CushDuist88}.
The quantum monodromy in the Lagrange top itself has been studied in \cite{Kozin03monodromy}.

While so-called semi-toric systems with two degrees of freedom (somewhat like the spherical pendulum) are now in a precise sense completely understood classically \cite{VuNgoc09} and quantum mechanically, \cite{SanVuNgoc21} the Lagrange top is still out of reach from this point of view.
We should mention that many generalisations of the spherical pendulum have been studied, in particular the magnetic spherical pendulum \cite{cushman1995magnetic,CushmanBates15}, also see \cite{saksida2002neumann}), and the quadratic spherical pendulum \cite{zou1992kolmogorov,Efstathiou05}. The combination of both is the harmonic Lagrange top, which is the object of this paper. To our knowledge, it has not been considered in the literature. It is an example of the general idea described in \cite{DP15}, where semi-toric systems are deformed preserving integrability.
In particular, we find that the harmonic Lagrange top exhibits the subcritical and the supercritical Hopf bifurcations for certain parameters.


As mentioned in the beginning, we want to draw attention to the fact that the quantisation of the Lagrange top leads to the confluent Heun equation. 
The Heun equation is a Fuchsian equation with 4 regular singular points, thus generalising the hypergeometric equation by one singularity, see, e.g., \cite{arscott64,ronveaux95,slayvanov00,DLMF}. 
An important physical application of the confluent Heun equation 
appears in the perturbation theory of a rotating black hole in 
general relativity \cite{teukolsky1973perturbations,press1973perturbations,Leaver86}. In this context, expansions in terms of Jacobi polynomials have been given in \cite{Crossman77}, and series expansion for small potential are given in \cite{seidel1989comment}.
As we show below, the harmonic Lagrange top leads to the most general confluent Heun equation, unlike the above application in general relativity, which does not have enough parameters.

The structure of this paper is as follows. We give an introduction to the Lagrange top in the next section, where we emphasise the description in the spatial frame using quaternions and the corresponding Poisson structure. The various periodic flows and their differences when considering $T^*SO(3)$ or $T^*S^3$ (the quaternions) is discussed in section 3, and the reductions to two degrees of freedom in section 4. The traditional description in Euler angles is recalled in section 5, which is needed for the quantisation. The main classical results are the description of the critical points in phase space and the corresponding critical values in the image of the energy-momentum map. There are 4 different cases, with one thread (the original Lagrange top), with two threads, with a triangular tube instead of the thread, and a triangular tube shrinking to a thread. In the final section we show that the quantum harmonic Lagrange top leads to the most general confluent Heun equation and compute the spectrum, which is displayed overlayed with (slices) of the classical energy-momentum map. A new method for the computation of the spectrum is presented.

\section{Heavy Symmetric Top}

Consider a general rigid body with a fixed point. Assume that the symmetric inertia tensor $I$ with respect to that point 
has three distinct eigenvalues $I_1$, $I_2$, $I_3$, the moments of inertia, and assume that a body frame has been chosen in which the tensor of inertia is diagonal. 
For the symmetric top with $I_1 = I_2$ the location of the corresponding basis vectors is only defined up to a rotation about the symmetry axis (or figure axis) of the body. 
In in the spatial coordinate frame, the $z$-axis is parallel to the direction of gravity. 
Let $\bm{V}$ be the coordinate vector of a point in the body frame. The orthogonal matrix $R \in SO(3)$ describes how this point is moving in time when viewed in the spatial frame, $\bm{v} = R \bm{V}$.

For the free rigid body (Euler top), the fixed point of the body is the centre of gravity of the body. For the Lagrange top, the centre of gravity is on the figure axis.
Denote the unit vector along the figure axis of the top by $\bm{a}$ (in the spatial frame), then the potential energy in the field of gravity is $V = c_1 a_z$. In this paper, we are going to study the more general case
\[
      V(a_z) =c_1 a_z + c_2 a_z^2
\]

The angular velocity $\bm{\Omega}$ in the body frame is defined through $R$ by 
$R^t \dot R \bm{V} = \Omega \times \bm{V}$ for any vector $\bm{V}$, or, equivalently, by $\hat{\bm{\Omega}} = R^t \dot R$.
The kinetic energy of the rigid body is 
\[
    T = \frac12 \bm{\Omega} \cdot I \bm{\Omega}
\]
where $I$ is the diagonal tensor of inertia.

The angular momentum vector is defined by $\bm{L} = I \bm{\Omega}$. For the free rigid body $\bm{l} = R \bm{L}$ is a constant vector. For the Lagrange top instead there are only two conserved quantities given by
\[
   l_z = \bm{l} \cdot \bm{e}_z, \quad
   L_3 = \bm{L} \cdot \bm{e}_3 = R^t \bm{l} \cdot \bm{e}_3 = \bm{l} \cdot R \bm{e}_3 = \bm{l} \cdot \bm{a} \,.
\]
We use both $\bm{e}_z$ and $\bm{e}_3$ as abbreviations for $(0,0,1)^t$.

A beautiful global description of the dynamics of rigid bodies uses quaternions $\bm{x} = (x_0, x_1, x_2, x_3)$ which are coordinates on the double cover of $SO(3)$ which is $S^3 \in \R^4$ given by $x_0^2 + x_1^2 + x_2^2 + x_3^2 = 1$. Define 
\[
    \bm{x}_\pm = \begin{pmatrix}
    x_1 & -x_0 & \mp x_3 & \pm x_2 \\
    x_2 & \pm x_3 & -x_0 & \mp x_1 \\
    x_3 & \mp x_2 & \pm x_1 & -x_0
    \end{pmatrix}
\]
which satisfy $\x_+ \x_+^t = id$, $\x_- \x_-^t = id$, $\x_+^t \x_+ \x_-^t = \x_-^t$, and $\x_-^t \x_- \x_+^t = \x_+^t$ on the unit sphere.
Then an orthogonal $3\times 3$ matrix is given by $R = \bm{x}_+ \bm{x}_-^t$. 
The last two identities in the previous sentence become $\x_+^t R = \x_-^t$ and $\x_-^t R^t = \x_+^t$.
The matrices $\bm{x}_\pm$ relate
the angular velocities to the tangent vector of the sphere $\dot{\bm{x}}$ by 
$\bm{\Omega} = 2 \bm{x}_- \dot{\bm{x}}$ and 
$\bm{\omega} = 2\bm{x}_+ \dot{\bm{x}}$, see, e.g., \cite[Section~16]{Whittaker37}.
To see this, differentiate $R$ with respect to time, observe that $\dot{\bm{x}}_+ \bm{x}_-^t = \bm{x}_+ \dot{\bm{x}}_-^t$, and use $R^t \bm{x}_+ = \bm{x}_-$.
Substituting $\bm{\Omega} = 2\bm{x}_- \dot{\bm{x}}$ into the expression for $T$ gives
\[
T=2 \dot{\bm{x}}^t (\bm{x}^t_- I \bm{x}_-) \dot{\bm{x}} \,.
\]
Differentiating with respect to $\dot{\bm{x}}$ gives the conjugate momenta $\bm{p}=4(\bm{x}^t_- I \bm{x}_-) \dot{\bm{x}}$ on $T^*S^3$. Using $\bm{L}= I\bm{\Omega}=2 I\bm{x}_- \dot{\bm{x}}$ we see that 
\[
\bm{L}=2 I\bm{x}_- \dot{\bm{x}}=2(\bm{x}_-\bm{x}^t_-) I \bm{x}_- \dot{\bm{x}}=\frac{1}{2}\bm{x}_- \bm{p} \,.
\]
Similarly, we have $\bm{l}=\frac12 \bm{x}_+\bm{p}$.
It is valid to use the canonical bracket between $\bm{x}$ and $\bm{p}$ because the resulting Hamiltonian automatically preserves $|\bm{x}|=1$ and $\bm{x} \cdot \bm{p} = 0$.

Now changing from canonical variables $(\bm{x}, \bm{p})$ to non-canonical variables $(\bm{x}, \bm{L})$ gives the Lie-Poisson structure in the body frame as \cite{Borisov97,Borisov18}
\[
     B_- = \begin{pmatrix}
     0 & \tfrac12 \bm{x}_-^t \\
     -\tfrac12 \bm{x}_- & \hat{\bm{L}}
     \end{pmatrix}, \quad 
     \dot{\bm{x}} = \tfrac12 x_-^t \nabla_L H, \quad 
     \dot{\bm{L}} = -\tfrac12 x_- \nabla_x H + \bm{L} \times \nabla_L H\,.
\]
Similarly, the Lie-Poisson structure in the space fixed frame is 
\[
     B_+ = \begin{pmatrix}
     0 & \tfrac12 \bm{x}_+^t \\
     -\tfrac12 \bm{x}_+ & -\hat{\bm{l}}
     \end{pmatrix}, \quad
     \dot{\bm{x}} = \tfrac12 x_+^t \nabla_l H, \quad \dot{\bm{l}} = -\tfrac12 x_+ \nabla_x H - \bm{l} \times \nabla_l H \,.
\]
Both Poisson structures have the Casimir $x_0^2 + x_1^2 + x_2^2 + x_3^2$. 
The Poisson structure $B_+$ is found by sandwiching the symplectic structure of the $(\x, \bm{p})$ variables by the Jacobian of the transformation of $(\x, \bm{l})$ and its transpose.

For the Euler top the usual Hamiltonian in the body frame is $H = \frac12 \bm{L} \cdot  I^{-1} \bm{L}$, and the complicated integrals are $R\bm{L}$ (which imply the simple integral $|\bm{L}|^2$). In the space fixed frame instead we have the complicated Hamiltonian $H = \frac12 \bm{l} \cdot R I^{-1} R^t \bm{l}$ with the simple integrals $\bm{l}$. We mention the Euler top here to make the point that for general moments of inertia, the description in the body frame is simpler. However, for a round rigid body with $I_1=I_2=I_3$ both Hamiltonians are equally simple. Also for a symmetric rigid body with say $I_1 = I_2$, the spatial frame is useful because 
\[
 2T= \bm{l} \cdot RI^{-1}R^t \bm{l} = \bm{l} \cdot \frac{1}{I_1} R( id + \delta \bm{e}_3 \bm{e}_3^t)R^t \bm{l} = \frac{1}{I_1}\left( \bm{l}^2 + \delta L_3^2 \right)
\]
where $\delta=I_1/I_3-1$. The important point is that $L_3 = \bm{e}_3 \cdot \bm{L} = \bm{e}_3 \cdot R^t \bm{l}=R\bm{e}_3 \cdot\bm{l} = \bm{l} \cdot \bm{a}$ is the angular momentum about the body's symmetry axis $\bm{e}_3$ and hence a constant of motion for the symmetric top.

\begin{theorem}
The Lagrange top (symmetric heavy rigid body with a fixed point on the symmetry axis) in coordinates $\bm x \in S^3 \subset \R^4$ and angular momenta $\bm{l}$ in the space fixed frame has Hamiltonian 
\[
H = \frac{1}{2I_1}( l_x^2 + l_y^2 + l_z^2 + \delta L_3^2) + V(x_0^2 + x_3^2 - x_1^2 - x_2^2)
\]
and Poisson structure $B_+$, with integrals $l_z$ and
\[
L_3 = 2l_x(-x_0x_2+x_1x_3) + 2l_y(x_0x_1+x_2x_3) + l_z (x_0^2+x_3^2-x_1^2-x_2^2).
\]
The vector fields of $l_z$ and $L_3$ 
generate a $T^2$ action with singularities. The vector field of the Hamiltonian is 
\begin{equation} \label{eqn:XH}
X_H = 
   \frac{1}{2 I_1} X_{l^2} 
+ \frac{ \delta L_3}{I_1}  X_{L_3} 
- \frac12 (0,0,0,0, \bm{x}_+ \nabla_x V)^t \,.
\end{equation}
The functions $H$, $l_z$, $L_3$ have pairwise vanishing Poisson bracket.
The vector fields $X_H$, $X_{L_3}$ and $X_{l_z}$ are independent almost everywhere.
\end{theorem}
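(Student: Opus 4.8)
The plan is to verify the six assertions in turn, in each case reducing to relations already established in this section: $\bm{l} = \tfrac12\bm{x}_+\bm{p}$, the identity $2T = \frac{1}{I_1}(\bm{l}^2 + \delta L_3^2)$, the relation $L_3 = \bm{l}\cdot\bm{a}$ with $\bm{a} = R\bm{e}_3$, and the explicit Poisson structure $B_+$.

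First I would establish the two closed-form expressions. The kinetic term is immediate from $2T = \frac{1}{I_1}(\bm{l}^2 + \delta L_3^2)$. For the potential and for the formula for $L_3$, I would compute the figure axis $\bm{a} = R\bm{e}_3$, i.e.\ the third column of $R = \bm{x}_+\bm{x}_-^t$, by multiplying out the $3\times 4$ matrices $\bm{x}_\pm$ according to their sign pattern. The diagonal entry gives $a_z = R_{33} = x_0^2+x_3^2-x_1^2-x_2^2$, which is the argument of $V$, while the first two entries of that column produce exactly the coefficients of $l_x$ and $l_y$ in $L_3 = \bm{l}\cdot\bm{a}$. These are finite, entirely mechanical computations.

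For the involution relations and the $T^2$ action I would use the geometric meaning of the two functions rather than grinding brackets. On the unit quaternions the flow of $l_z$ is the lift of rotation about the spatial $z$-axis (a left action), while the flow of $L_3 = \bm{l}\cdot\bm{a}$ rotates the body about its own figure axis (a right action); left and right multiplications commute, so $\{l_z, L_3\} = 0$ and the two periodic flows jointly generate a $T^2$ action, singular exactly where a generator has a fixed point, for instance where the figure axis is vertical. The invariances $\{H, l_z\} = 0$ and $\{H, L_3\} = 0$ then follow because $H$ is assembled from $\bm{l}^2$, $L_3$ and $V(a_z)$: the quantity $\bm{l}^2 = |R\bm{L}|^2 = |\bm{L}|^2$, being the squared length of an angular momentum vector, is preserved by both rotations; $L_3$ is trivially invariant under its own flow and is preserved by spatial $z$-rotation (both $\bm{l}$ and $\bm{a}$ rotate by the same orthogonal matrix); and $a_z$ is fixed by rotation about the figure axis and by rotation about the vertical. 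A reader preferring coordinates can instead verify all three brackets directly from $B_+$, which is longer but routine.

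The decomposition \eqref{eqn:XH} follows from linearity of $f\mapsto X_f$ applied to $H = \frac{1}{2I_1}\bm{l}^2 + \frac{\delta}{2I_1}L_3^2 + V$, together with the Leibniz rule $X_{L_3^2} = 2L_3\,X_{L_3}$, which produces the middle term $\frac{\delta L_3}{I_1}X_{L_3}$; the first summand is simply $\frac{1}{2I_1}X_{l^2}$ with no manipulation needed. The remaining piece $X_V$ is read straight off $B_+$: because $V$ depends only on $\bm{x}$, we have $\nabla_l V = 0$, hence $\dot{\bm{x}} = 0$ and $\dot{\bm{l}} = -\tfrac12\bm{x}_+\nabla_x V$, which is precisely the last term. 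Finally, for independence almost everywhere I would argue that $dH$, $dL_3$, $dl_z$ are linearly independent on an open dense subset of each six-dimensional symplectic leaf of $B_+$; since full rank is an open condition and the three functions are polynomial, it suffices to exhibit a single regular point — a configuration with the figure axis tilted away from the vertical and generic $\bm{l}$ — after which the degenerate set is a proper subvariety of measure zero. I expect the main obstacle to be bookkeeping rather than depth: one must keep the left (spatial) versus right (body) character of the two circle actions straight, together with the factor-of-two subtlety of the double cover $S^3\to SO(3)$ that distinguishes $T^*S^3$ from $T^*SO(3)$ and is treated in Section~3.
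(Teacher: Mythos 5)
Your proposal is correct and follows essentially the same route as the paper: the paper gives no self-contained proof of this theorem, but assembles it from precisely the ingredients you invoke --- the symmetric-top identity $2T = \frac{1}{I_1}(\bm{l}^2+\delta L_3^2)$, the formula $L_3 = \bm{l}\cdot\bm{a}$ with $\bm{a}=R\bm{e}_3$ read off from $R=\bm{x}_+\bm{x}_-^t$, the Poisson structure $B_+$ applied to a potential with $\nabla_l V=0$, and the identification of the two flows as commuting left (spatial) and right (body) rotations, developed in Sections 2 and 3. One small correction to your description of the singular set: the $T^2$ action is singular where $X_{l_z}$ and $X_{L_3}$ become \emph{parallel}, namely at the sleeping tops where $l_x=l_y=0$ and either $x_0=x_3=0$ or $x_1=x_2=0$ (both $\bm{a}$ and $\bm{l}$ vertical); a vertical figure axis alone does not suffice, and in fact neither generator's flow has fixed points on $T^*S^3$ --- it is only the combinations $l_z\pm L_3$ whose flows acquire fixed points there, which is the distinction the paper draws between the $T^*SO(3)$ and $T^*S^3$ pictures.
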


This theorem is well known for the case of linear potential, and when using Euler angles it is part of most mechanics text books. Instead we offer a global description in the spatial frame with a Poisson structure. In addition, in order to make the connection with the general confluent Heun equation, we consider not just a linear potential (gravity), but in addition a quadratic term. 
After some preparations in the next sections discussing the torus action, the reduction, and briefly recalling Euler angles, the main technical part is the description of the set of critical values of the energy-momentum map in Lemma~2.

\section{Torus action}

The vector field generated by $L_3$ in the space fixed coordinate system is
\begin{equation} \label{eqn:XL3}
X_{L_3} = B_+ \nabla L_3 
= \tfrac12( \bm{x}_+^t R\bm{e}_3,0,0,0 )^t
 =(\tfrac12 \bm{x}_-^t \bm{e}_3,0,0,0)^t
\end{equation}
where we used the identity $\x_+^t R = \x_-^t$.
This vector field can be easily integrated (two harmonic oscillators) to give the flow $\Phi_{L_3}^\psi$. This flow rotates $(x_0, x_3)$ and $(x_1,x_2)$ by $\psi/2$ clockwise.
However, when the flow acts on $R$ it acts by multiplication by a counterclockwise rotation about the $z$-axis through $\psi$ (not $\psi/2$!) from the right.
Thus $L_3$ has $2\pi$-periodic flow on $T^*SO(3)$ and hence is an action variable.

The vector field generated by the integral $l_z$ is 
\begin{equation} \label{eqn:Xlz}
X_{l_z} = B_+ \nabla l_z  = ( \tfrac12 \bm{x}_+^t \bm{e}_z,-\bm{l} \times \bm{e}_z )^t.
\end{equation}
Again, this vector field is easily integrated (three harmonic oscillators) giving the flow $\Phi_{l_z}^\phi$.
The action on $R$ is by multiplication with a counterclockwise rotation about the $z$-axis through $\phi$ from the left. In addition, the momentum vector $\bm{l}$ is rotated by the same rotation matrix.
Thus $l_z$ has $2\pi$-periodic flow on $T^*SO(3)$ and hence is an action variable.

The vector fields $X_{L_3}$ and $X_{l_z}$ are parallel when $l_x = l_y = 0$ and either $x_0 = x_3 = 0$ or $x_1 = x_2 = 0$. These critical points have $\bm{l} || \bm{a}|| \bm{e}_z$ and are called sleeping tops. In the first case $a_z = -1$ (hanging sleeping top), while in the second case $a_z = +1$ (upright sleeping top). The torus action is singular at these points because the rotations coincide. Since $L_3 = \bm{l} \cdot \bm{a}$ we see that $L_3 = -l_z$ for the hanging sleeping top and $L_3 = l_z$ for the upright sleeping top. The corresponding critical points of $H$ are two parabolas above $l_z \pm L_3 = 0$.

\rem{\footnote{When considering the flows of the same integrals $l_z$ and $L_3$ but in the body frame with Poisson structure $B_-$, the action on the rotation matrix $R$ is the same, while the action on the momenta is reversed: The flow generated by $L_3$ (in the body frame) rotates the momentum vector $\bm{L}$ clockwise, while the flow generated by $l_z$ (in the body frame) leaves the momentum vector $\bm{L}$ fixed.}
}

The vector fields $X_{l_z}$ and $X_{L_3}$ both have $2\pi$ periodic flows on $T^*SO(3)$, i.e. they map $\bm{x}$ to $-\bm{x}$ after time $2\pi$.
 When considered as flows on $S^3$ both flows have period $4\pi$.
%
Now consider the vector fields generated by $l_z \pm L_3$. These are both $2\pi$ periodic vector fields on $T^*S^3$. 
Points with $l_x = l_y = 0$ and either $x_0 = x_3 = 0$ or $x_1 = x_2 = 0$, respectively, are fixed points of these flows. Nevertheless, they are  action variables on $T^*S^3$.
Notice that as flows on $T^*SO(3)$ the flows of $l_z \pm L_3$ do not have constant period, since points with $l_x=l_y=0$ and either $x_0 = x_3$ or $x_1=x_2$ have minimal period $\pi$, while all other non-fixed points have minimal period $2\pi$.
The $T^2$ action on $T^*S^3$ is of course still singular, the  difference is that now the exceptional sets of points are found as those where one of the vector fields vanishes.

The vector field of the spherical Euler top is that of $\bm{l}^2 = l_x^2 + l_y^2 + l_z^2$. The vector fields of $l_x$ and $l_y$ are permutations to that of $l_z$ given in \eqref{eqn:Xlz}. Combining these gives
\[
  X_{l^2} = (\bm{x}_+^t \bm{l} , 0,0,0)^t \,.
\]
Here the components of $\bm{l}$ are all constant, and the flow of this vector field is a rotation about the axis $\bm{l}$. This is also a periodic flow, but the period is not constant. To obtain constant period, we consider the flow generated by $l = \sqrt{ \bm{l}^2}$, which we denote by $X_l$.
This flow commutes with the flows of $l_z$ and $L_3$, but not with that of $H$.
The flow of $\bm{l}^2$ leaves $\bm{l}$ constant and so
\[
   \Phi_{l}^\alpha = \exp \left( \frac{\alpha}{2 l}
\left(
\begin{array}{cccc}
 0 & l_x & l_y & l_z \\
 -l_x & 0 & -l_z & l_y \\
 -l_y & l_z & 0 & -l_x \\
 -l_z & -l_y & l_x & 0 \\
\end{array}
\right) \right)
=
\left(
\begin{array}{cccc}
 \cos  \frac{\alpha }{2}  & l_x/l \sin  \frac{\alpha }{2}  & l_y/l \sin \frac{\alpha }{2}  & l_z/l \sin  \frac{\alpha}{2}  \\
 -l_x/l \sin  \frac{\alpha }{2}  & \cos  \frac{\alpha }{2}  & -l_z/l \sin \frac{\alpha }{2}  & l_y/l \sin  \frac{\alpha}{2}  \\
 -l_y/l \sin \frac{\alpha }{2}  & l_z/l \sin  \frac{\alpha }{2}  & \cos  \frac{\alpha }{2}  & -l_x/l \sin  \frac{\alpha}{2}  \\
 -l_z/l \sin  \frac{\alpha }{2}  & -l_y/l \sin  \frac{\alpha }{2}  & l_x/l \sin  \frac{\alpha }{2}  & \cos  \frac{\alpha}{2}  \\
\end{array}
\right) \,.
\]
When acting with this flow on the rotation matrix $R$ with initial condition $\bm{x} = (1,0,0,0)$ gives Rodrigues' parametrisation of $SO(3)$ with rotation axis $\bm{l}/l$ and rotation angle $\alpha$. Thus Rodrigues' formula gives the geodesics of the spherical top.
When acting with this flow on $S^3$ it is periodic with period $4\pi$.

The reason we are including this flow is that there is an interesting difference between $SO(3)$ and $S^3$. On $T^*SO(3)$ the singular $T^3$ torus action generated by the commuting flows of $l_z$, $L_3$, and $l$ is faithful. This means that outside the singularity where $l_z \pm L_3 = 0$ the action on each $T^3$ obtained by fixing the values of the generators is faithful and effective.
By contrast, when considering the $T^3$ torus action generated by $l_z+L_3$, $l_z - L_3$, and $l+l$ on $T^*S^3$ the action is not faithful on regular tori. The reason is that when flowing each flow only for angle $\pi$, then the first two flows together achieve $\x \to -\x$, and this is cancelled by the flow $\Phi_{2l}^\pi=\Phi_l^{2\pi}$. 

\section{Reductions}

The flows of $l_z$ and $L_3$ are global $S^1$ actions, and hence allow for regular reduction. It is straightforward to obtain the reduced system from the global system with Poisson structure $B_\pm$. The $l_z$-reduced system are the well known Euler-Poisson equations, while the $L_3$-reduced equations are somewhat less well known in classical mechanics (see, e.g., \cite{bobenko99,CushmanBates15,D03b}).
The full reduction is singular because the $T^2$ action of $l_z$ and $L_3$ is singular. The standard description is in $zxz$-Euler angles, the singular reduction using invariants is in \cite{CushmanBates15}. A peculiar property of Euler angles 
is that the $\psi$-rotation leaves the figure axis invariant (it acts on the right) while the $\phi$-rotation leaves the direction of gravity invariant (it acts on the left), and hence Euler angles are neither space-fixed nor body-fixed.
The quantisation of the top (see below) starts out with Euler angles \cite{landau13quantum}, but in the end, writing the Hamiltonian using $\bm{l}^2$ and $L_3^2$ suggests that there 
 the spatial frame is also useful.

The reduction by the symmetry $\Phi_{L_3}^\psi$ introduces the coordinates of the axis of the top $\bm{a} = R \bm{e}_3$ as new coordinates. 
This is, in fact, reduction by invariants, since 
the third column of $R$ is given by 
$( 2(x_0 x_2 + x_1 x_3), -2 x_0 x_1 + 2 x_2 x_3, x_0^2 + x_3^2 - x_1^2 - x_2^2 )$ and these are all 
invariant under the two-oscillator flow $\Phi_{L_3}^\psi$. 
We already noted that $\Phi_{L_3}^\psi$ acts on $R$ by multiplication by $R_z(\psi)$ from the right, and hence
$R R_z(\psi) \bm{e}_3 = R \bm{e}_3 = \bm{a}$ is invariant.
The resulting reduced system has Poisson structure
\[
B_+^r = \begin{pmatrix}
   0 & -\hat{\bm{a}}\\
   -\hat{\bm{a}} & -\hat{\bm{l}}
\end{pmatrix}, 
\quad
\dot{\bm{a}} = - \bm{a} \times \nabla_l H, \quad
\dot{\bm{l}} = - \bm{a} \times \nabla_a H - \bm{l} \times \nabla_l H
\,.
\]
Denote the Jacobian of the transformation from $(\x, \bm{l})$ to $(\bm{a}, \bm{l})$ by $A$. 
Then $B_+^r = A^t B_+ A$ when expressed in the new variables.
The main identity in the reduction from $B_+$ to $B_+^r$ is
$\frac12 \frac{ \partial \bm{a}}{\partial \bm{x}} \bm{x}_+^t = \hat{\bm{a}}$.
The Poisson structure $B_+^r$ has
Casimirs $\bm{a}^2=1$ and $\bm{a}\cdot \bm{l}$ and the reduced Hamiltonian is
\[
   H = \frac{1}{2I_1} (  \bm{l}^2 + \delta (\bm{a} \cdot \bm{l})^2) + V(a_z).
\]
Since $\bm{a}\cdot\bm{l}$ is a Casimir (equal in value to the generator of the symmetry $L_3$) it does not contribute to the dynamics but merely changes the value of the Hamiltonian.

\rem{
Choosing the units of angular momentum to be $I_1$ and ignoring the shift in value of the Hamiltonian by a Casimir we arrive at
\[
   H = \frac12 \bm{l}^2 + c a_z + d a_z^2
\]
where we added an extra quadratic potential term. This is the Hamiltonian of the Lagrange top plus harmonic potential that we are going to study in this paper. }

Note that reduction by the symmetry generated by the integral $l_z$ is more complicated in the spatial frame since the flow is a rotation in $\bm{x}$ and in $l_x, l_y$. However, when switching to the body frame then the flow of $l_z$ (written in terms of $\bm{L}$) is simpler. Reduction is achieved by introducing the invariant of the left action generated by $l_z$, which is $ \bm{e}_3^t R_z(\phi) R = \bm{e}_3^t R = \bm{\Gamma}^t$ with Poisson structure
\[
    B_-^r = 
    \begin{pmatrix}
   0 & \hat{\bm{\Gamma}}\\
   \hat{\bm{\Gamma}} & \hat{\bm{L}}
\end{pmatrix},
\quad
\dot{\bm{\Gamma}} = \bm{\Gamma} \times \nabla_L H, \quad
\dot{\bm{L}} = \bm{\Gamma} \times \nabla_\Gamma H + \bm{L} \times \nabla_L H
\,.
\]
The reduction leads to the more familiar Hamiltonian of the Lagrange top given by
\[
    H = \frac{1}{2I_1} ( L_1^2 + L_2^2) + \frac{1}{2I_3} L_3^2 + V( \Gamma_3)
\]
where $\bm{\Gamma}$ is $\bm{e}_z$ viewed from the body frame. The Poisson structure is $B_-^r$ with the opposite sign than $B_+^r$.
These are the equations usually called Euler-Poisson equations. Their advantage is that this reduction remains valid for an arbitrary rigid body with a fixed point, and this family for appropriate moments of inertia and position of the centre of mass contains the Kovalevskaya top, the Euler top, and all other non-integrable tops.

The Hopf bifurcation in the sleeping top with $\bm{a} || \bm{l} || \bm{e}_z$ respectively $\bm{\Gamma} || \bm{L} || \bm{e}_3$ is best described in the reduced system, because the corresponding periodic orbit becomes a relative equilibrium after reduction.
It is easy to check that in deed these are equilibria, and linearising the Hamiltonian vector field about these equilibria yields a $6\times 6$ matrix with 2 eigenvalues zero corresponding to the two Casimirs. The characteristic polynomial for the remaining non-trivial eigenvalues is
\[
    P_+(\lambda) = \lambda^4 + \lambda^2( \kappa^2 - 2 f) + f^2 = 0, \quad 
    \kappa = l_z/I_1 = \omega I_3/I_1, \quad 
    f = a_z V'(a_z)/I_1, \quad 
    a_z = \pm 1 \,.
\]
in the spatial frame and 
\[
   P_-(\lambda) = P_+(\lambda) + \omega( \omega -\kappa) ( 2 \lambda^2 + 2 f + \omega(\omega - \kappa ) ), \quad
   \omega = l_z/I_3
\]
in the body frame. The eigenvalues are not the same because the system is described in a frame rotating with angular velocity $\omega$. However, they differ only by $\pm i \omega$ such that the Floquet multiplier $\mu = \exp( \lambda T)$ of the periodic orbit with period $T = 2\pi/\omega$ is the same.
The description in the spatial frame gives simpler formulas.

At the Hopf bifurcation the eigenvalues change from all purely imaginary via a collision on the imaginary axis to a quadruple of complex eigenvalues. This occurs when the discriminant of $P_+(\lambda)$ considered as a quadratic equation in $\lambda^2$ changes from positive to negative. The discriminant is given by $\kappa^2 (\kappa^2 - 4 f)$. When $f$ is negative the eigenvalues are purely imaginary for any $\kappa$. When $f$ is positive eigenvalues are purely imaginary when $\kappa^2 > 4 f$, while the top is unstable when $\kappa^2 < 4 f$. This is the classical stability condition for the Lagrange top, here obtained for arbitrary potential. At the critical case $\kappa^2 = 4f$ the eigenvalues collide and $\lambda^2 = -\kappa^2/4$. 

\section{Euler Angles}

The Poisson structures $B_\pm$ allow for a global description of rigid body dynamics free of coordinate singularities. However, often explicit canonical coordinates are more convenient, and even essential for the quantisation of the problem.
Such a coordinate system adapted to the symmetries is given by $zxz$-Euler angles such that 
\[
    R = R_z(\phi) R_x(\theta) R_z(\psi) \,.
\]
The canonically conjugate momenta are denoted by $p_\phi$, $p_\theta$, $p_\psi$, respectively. Then we have that $l_z = p_\phi$ and $L_3 = p_\psi$.
The Hamiltonian in these coordinates is
\[
   H = \frac{1}{2I_1} ( 2 T_{round} + \delta p_\psi^2) + V(\cos\theta)
\]
where $T_{round}$ is the kinetic energy of the spherical top with moment of inertia 1:
\[
    T_{round} = \frac12 \left( p_\theta^2 + \frac{1}{\sin^2\theta} \left( p_\phi^2 + p_\psi^2 - 2 p_\phi p_\psi \cos\theta \right)
    \right) = \frac12 \bm{l}^2 \,.
\]
Notice that this round metric on $SO(3)$ is a metric of constant curvature and hence up to a covering equivalent to the metric of the round sphere $S^3$.
Transforming $\phi = (\phi_+ - \phi_-)/2$, and $\psi = (\phi_+ + \phi_-)/2$ with Jacobian $1/2$
in the new coordinate system the metric is diagonal 
\[
    ds^2 = d\theta^2 + \cos^2\tfrac{\theta}{2} \, d\phi_+^2 + \sin^2 \tfrac{\theta}{2} \, d\phi_-^2\,.
\]
This is the metric of the round sphere in doubly cylindrical (or polyspherical, see \cite[10.5]{vilenkin1992representationII}) coordinates
\[
    ((\cos\phi_+, \sin\phi_+)\cos\tfrac{\theta}{2},
    (\cos\phi_-, \sin\phi_-)\sin\tfrac{\theta}{2}),
\]
where $ 0 \le \theta/2 \le \pi/2$ while $\phi_+, \phi_-$ are true angles.

Away from the coordinate singularity where the torus action becomes singular, Euler angles are a smooth local coordinate system. Equilibrium points in $\theta$ are determined by $\partial H/ \partial \theta = 0$. For later use, we denote this function by $H_\theta$, and similarly 
the 2nd derivative by $H_{\theta\theta}$.


\begin{figure}
\begin{centering}
\includegraphics[width=6cm,height=6cm,keepaspectratio]{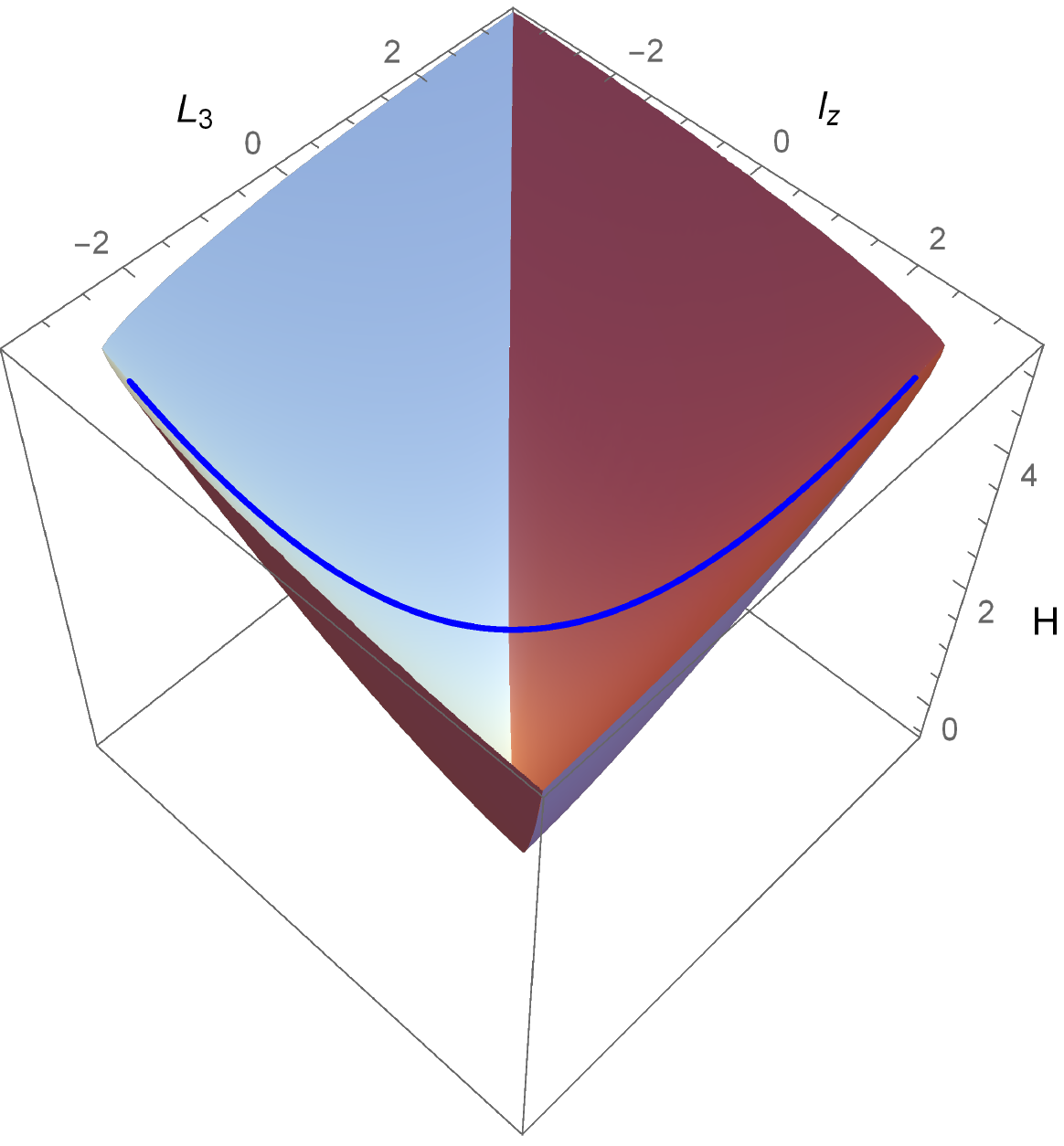}\hspace{1cm}\includegraphics[width=6cm,height=6cm,keepaspectratio]{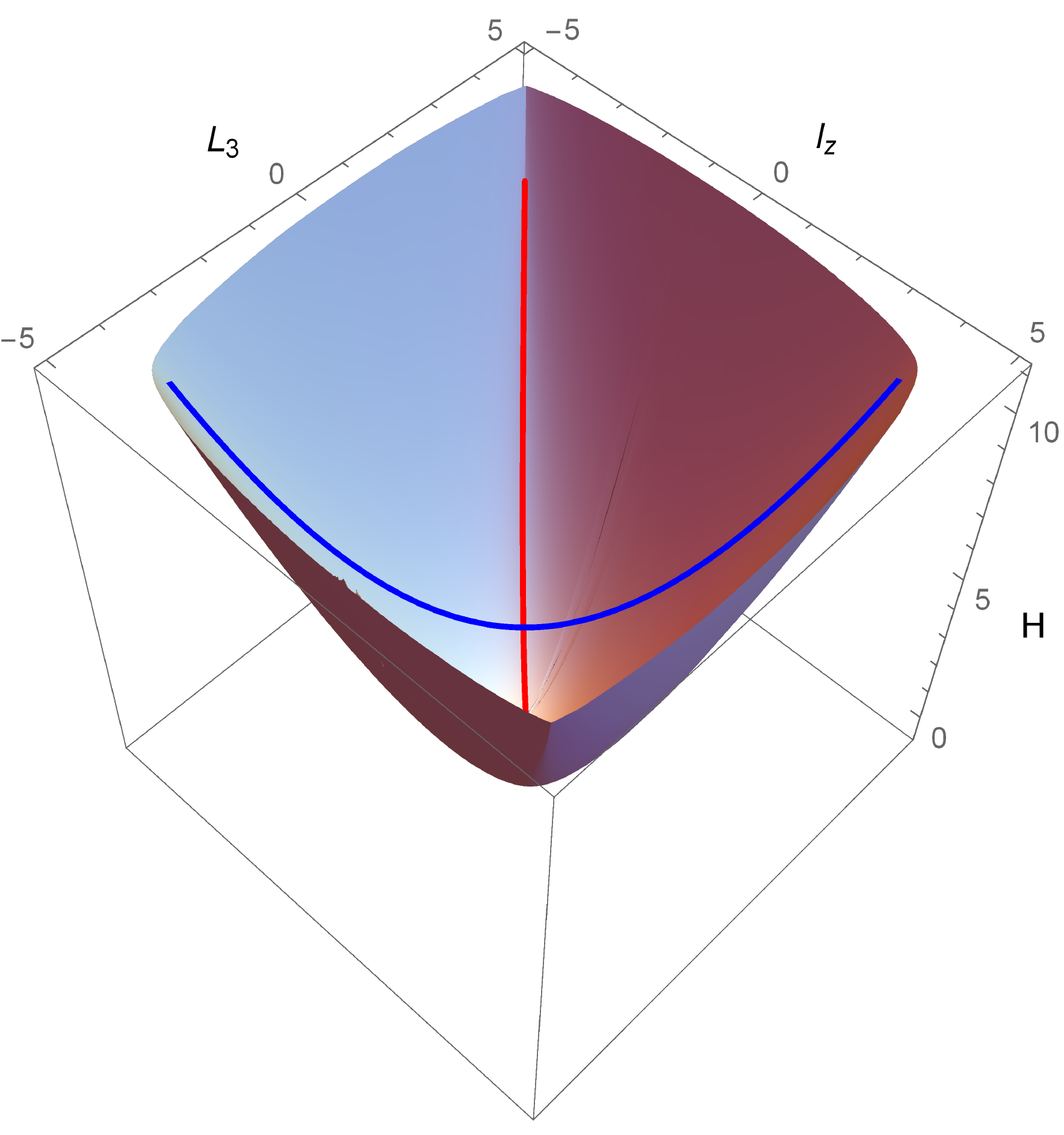}\vspace{0.5cm}\\
\includegraphics[width=6cm,height=6cm,keepaspectratio]{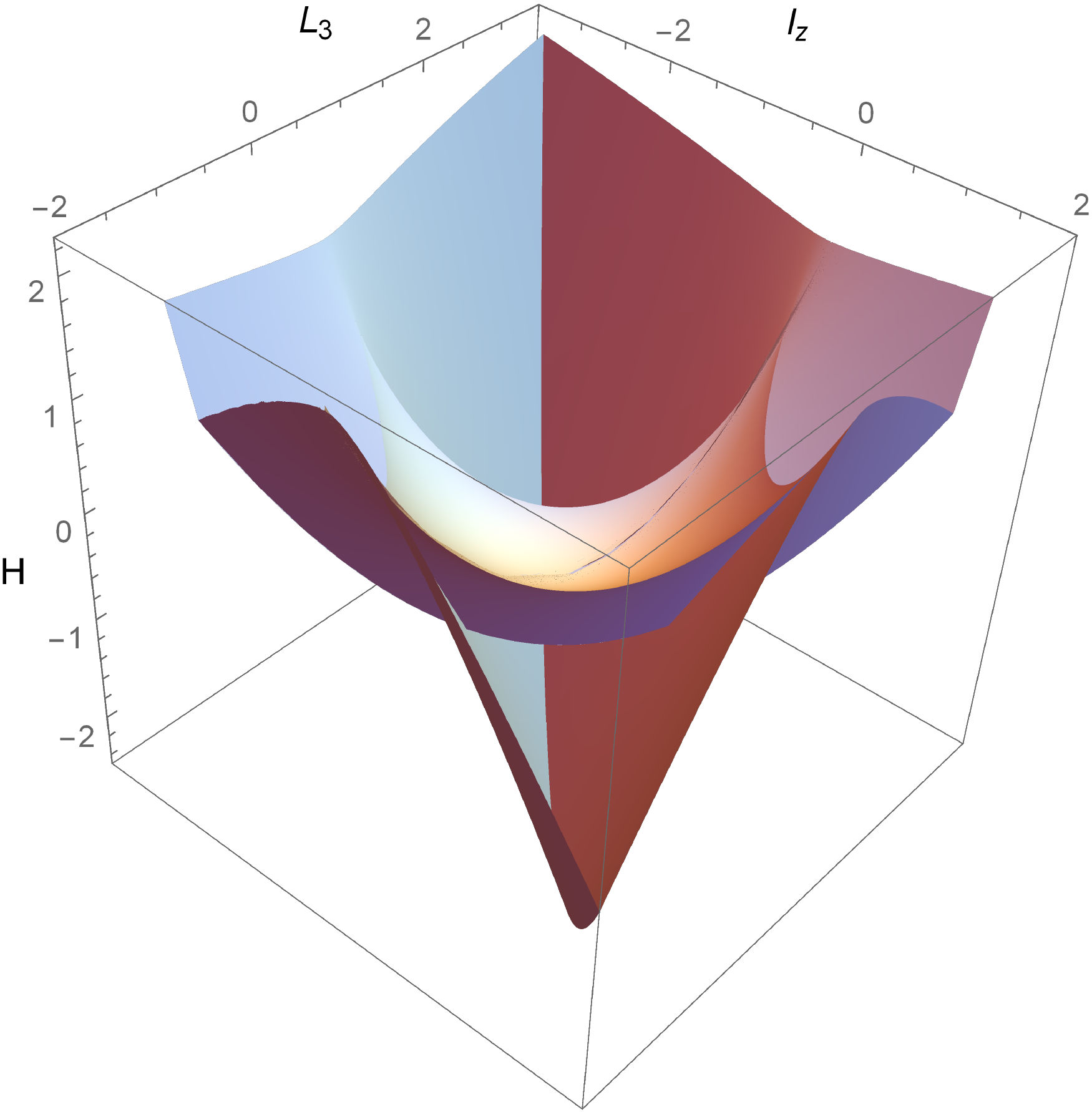}\hspace{1cm}\includegraphics[width=6cm,height=6cm,keepaspectratio]{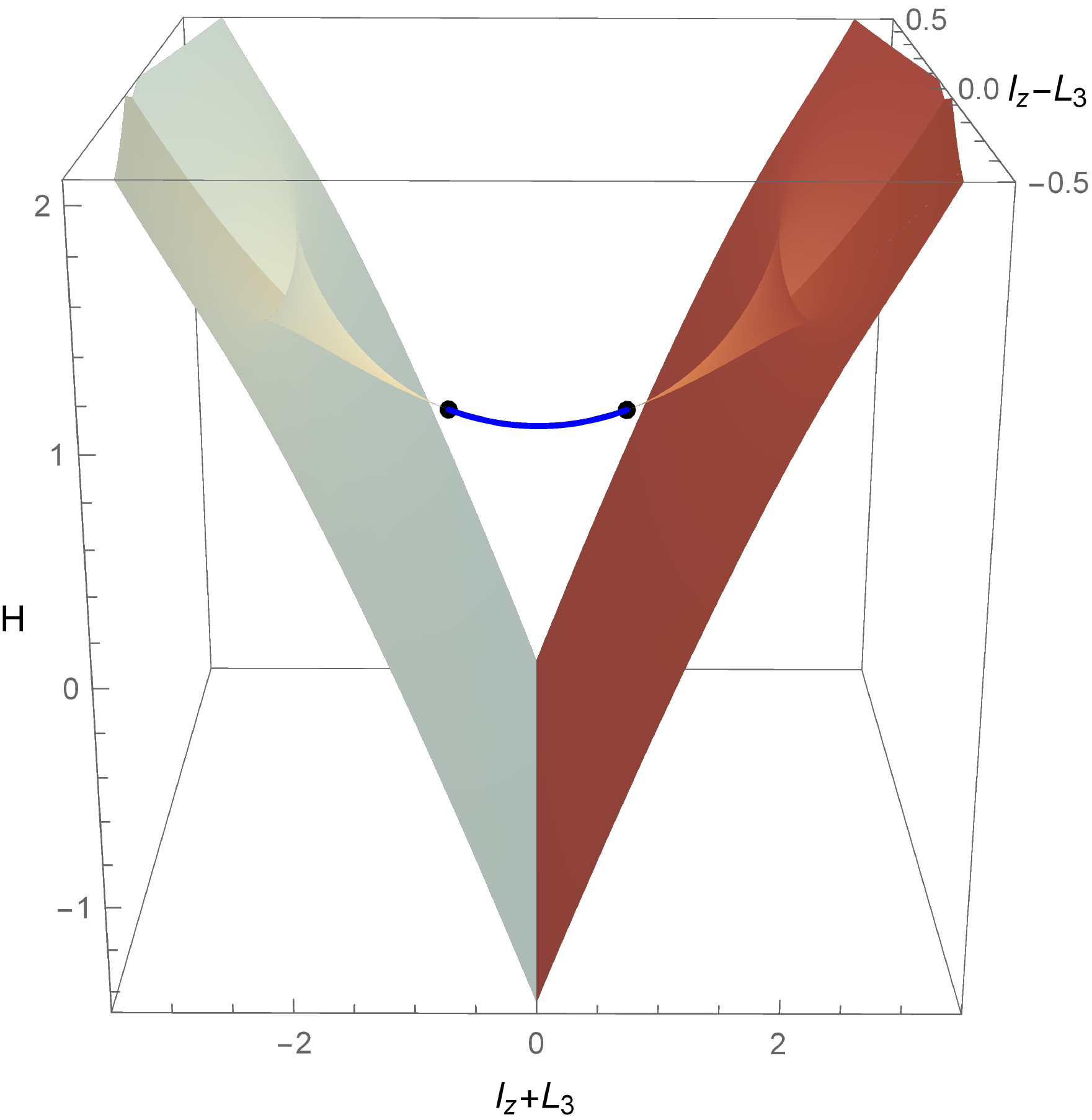}
\par\end{centering}
\caption{Example of a) one thread, b) two threads, c) triangular tube and d) shrinking triangular tube.
All examples use $I_{1}=1,\delta=0,c_{1}=1$. The values of $c_2$ used in a)-d) are $c_{2}=(0.4,2.5,-1.5,-0.48)$. \label{fig:3d examples}}
\end{figure}

\rem{
\begin{figure}[H]
\begin{centering}
\includegraphics[width=6cm,height=6cm,keepaspectratio]{"ThickTube".pdf}\includegraphics[width=6cm,height=6cm,keepaspectratio]{"Pinched Tube Zoom".pdf}\\
\par\end{centering}
\caption{a) Zoomed image of the tube in Figure (\ref{fig:3d examples}) c).
b) In the limit $c_{1}=-2c_{2}$ (with $c_{2}\le0$ and $c_{1}>0$)
the tube has a pinch develop at $(M,K,H)=(0,0,c_{1}+c_{2})$. }
\end{figure}
}

\rem{
\begin{figure}[H]
\begin{centering}
\includegraphics[width=6cm,height=6cm,keepaspectratio]{RedSpecExample\string".pdf}\includegraphics[width=6cm,height=6cm,keepaspectratio]{BlueSpecExample\string".pdf}\\
\par\end{centering}
\caption{a) Slice of $M=0$ and corresponding quantum spectrum. In the limit
$c_{1}=-2c_{2}$ the red triangle shrinks to a focus-focus point.
b) Slice with $K=0$.}
\end{figure}
}

\section{Bifurcation diagram}

The energy-momentum map from $T^*SO(3)$ to $\R^3$ is given by $(l_z, L_3, H)$ where $L_3$
is given in terms of $\x$ and $\bm{l}$ as in Theorem~1. The bifurcation diagram of this integrable system is the set of critical values of the energy-momentum map.
Hence we are interested in the rank of $(X_{l_z}, X_{L_3}, X_H)$.
To determine where the rank drops we consider
\begin{equation} \label{eqn:combi}
    \alpha X_{L_3} + \beta X_{l_z} + \gamma X_H = 0 \,.
\end{equation}

\begin{lemma}
The rank 1 points of the energy-momentum map are given by two parabolas of sleeping tops
\begin{equation} \label{eqn:parabola}
    (l_z, L_3, H) = \left( m, \pm m,\frac{m^2}{2 I_1}( 1+ \delta) + V(\pm 1) \right) \,.
\end{equation}
The rank 2 points have a rational parametrisation determined by 
$
   \bm{l}(\beta, a_z) = \frac{1}{\beta} I_1 V'(a_z) \bm{a} + \beta \bm{e}_z
$
such that for $a_z \in [-1,1]$ and $\beta \in \mathbb{R}$ the critical values of the energy-momentum map are
\[
   l_z(\beta, a_z) = \frac{1}{\beta} I_1 V'(a_z) a_z + \beta, \quad
   L_3(\beta, a_z) = \frac{1}{\beta} I_1 V'(a_z) + \beta a_z, \quad
   H(\beta, a_z) = \frac{1}{2I_1}( \bm{l}(\beta, a_z)^2 + \delta L_3(\beta, a_z)^2) + V(a_z) \,.
\]
\end{lemma}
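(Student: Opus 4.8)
The plan is to read off the rank of the energy--momentum map $(l_z,L_3,H)$ directly from the three Hamiltonian vector fields. On the symplectic leaf $|\x|=1$ the Poisson structure $B_+$ is non-degenerate, so the rank of the differential of the map equals the rank of $(X_{l_z},X_{L_3},X_H)$, and a point is critical precisely when the relation \eqref{eqn:combi}, $\alpha X_{L_3}+\beta X_{l_z}+\gamma X_H=0$, admits a nontrivial solution $(\alpha,\beta,\gamma)$; the corank equals the dimension of the space of such relations. First I would split \eqref{eqn:combi} into its $\x$-part (first four components) and its $\bm{l}$-part (last three), using the explicit fields \eqref{eqn:XL3}, \eqref{eqn:Xlz}, \eqref{eqn:XH}. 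The key structural fact is that $X_{L_3}$ and $X_{l^2}$ contribute only to the $\x$-part, so the $\bm{l}$-part of \eqref{eqn:combi} receives contributions only from the $-\bm{l}\times\bm{e}_z$ term of $X_{l_z}$ and from the potential term of $X_H$.

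For the $\x$-part I would use $\x_-^t\bm{e}_3=\x_+^t R\bm{e}_3=\x_+^t\bm{a}$ (from $\x_+^t R=\x_-^t$) to write every $\x$-component as $\tfrac12\x_+^t(\,\cdot\,)$. Since $\x_+\x_+^t=\mathrm{id}$, the matrix $\x_+^t$ is injective, so the four $\x$-equations collapse to the single vector equation in $\R^3$
\[
  \Big(\alpha+\tfrac{\gamma\delta L_3}{I_1}\Big)\bm{a}+\beta\,\bm{e}_z+\tfrac{\gamma}{I_1}\bm{l}=0 .
\]
For the $\bm{l}$-part the one genuine computation is the potential term $\x_+\nabla_x V=V'(a_z)\,\x_+\nabla_x a_z$; a short computation (consistent with the reduced equation $\dot{\bm{l}}=V'\,\bm{e}_z\times\bm{a}$ and with the reduction identity relating $\partial\bm{a}/\partial\x$ and $\hat{\bm{a}}$) gives $\x_+\nabla_x a_z=2\,\bm{a}\times\bm{e}_3$, so the $\bm{l}$-part of $X_H$ is $V'(a_z)\,\bm{e}_3\times\bm{a}$. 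Combining with $-\bm{l}\times\bm{e}_z$ from $X_{l_z}$, the $\bm{l}$-equation becomes $\bm{e}_3\times\big(\beta\bm{l}+\gamma V'(a_z)\bm{a}\big)=0$, i.e.\ $\beta\bm{l}+\gamma V'(a_z)\bm{a}$ is parallel to $\bm{e}_3$.

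The rest is linear algebra on these two $\R^3$-equations. In the generic case $\bm{a}\neq\pm\bm{e}_z$ (so $\bm{a},\bm{e}_z$ are independent) the $\x$-equation forces $\gamma\neq0$ (otherwise $\alpha=\beta=0$) and hence $\bm{l}=\mu\bm{a}+\nu\bm{e}_z$ for scalars $\mu,\nu$ read off from it; substituting into the $\bm{l}$-equation and using $\bm{a}\nparallel\bm{e}_z$ yields the single scalar condition $\mu\nu=I_1V'(a_z)$, after which the solution space is exactly one-dimensional, so these are rank-$2$ points. Normalising $\gamma=-I_1$ identifies the free parameter with $\nu=\beta$, whence $\mu=I_1V'(a_z)/\beta$ and $\bm{l}(\beta,a_z)=\tfrac{1}{\beta}I_1V'(a_z)\,\bm{a}+\beta\,\bm{e}_z$. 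The formulas for $l_z=\bm{l}\cdot\bm{e}_z$, $L_3=\bm{l}\cdot\bm{a}$ and $H$ then follow at once from $|\bm{a}|=1$ and $\bm{a}\cdot\bm{e}_z=a_z$, and one checks they depend only on $(\beta,a_z)$, not on the azimuthal choice of $\bm{a}$; every such value is attained, since $\bm{a}$ with prescribed $a_z\in[-1,1]$ and arbitrary $\bm{l}$ are realisable on the leaf.

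Finally I would treat the degenerate locus $\bm{a}=\pm\bm{e}_z$ ($a_z=\pm1$). There the $\bm{l}$-equation is automatic, and the $\x$-equation shows a nontrivial relation needs $\bm{l}\parallel\bm{e}_z$ (if $\bm{l}\nparallel\bm{e}_z$ the point is regular); when $\bm{l}=m\bm{e}_z$ the $\x$-equation reduces to one scalar condition, leaving a two-dimensional solution space, hence rank $1$. These are exactly the sleeping tops, and evaluating $(l_z,L_3,H)$ gives the two parabolas \eqref{eqn:parabola} with $L_3=\pm l_z$ for the upright and hanging cases; the rank-$2$ surface meets them along $a_z=\pm1$. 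Rank $0$ never occurs, since $X_{L_3}$ cannot vanish ($\x_-^t\bm{e}_3=0$ would force $\bm{e}_3=0$). I expect the main obstacle to be the clean evaluation of $X_H$ --- in particular establishing $\x_+\nabla_x V=2V'(a_z)\,\bm{a}\times\bm{e}_3$ with the correct sign and collapsing the $\x$-part via injectivity of $\x_+^t$; once \eqref{eqn:combi} is reduced to the two $\R^3$-equations above, the case analysis and the extraction of the rational parametrisation are routine.
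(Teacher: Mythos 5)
Your overall strategy is the same as the paper's: demand a nontrivial relation \eqref{eqn:combi}, use $\x_-^t\bm{e}_3=\x_+^t\bm{a}$ and injectivity of $\x_+^t$ to collapse the $\x$-part to $\bigl(\alpha+\gamma\delta L_3/I_1\bigr)\bm{a}+\beta\bm{e}_z+(\gamma/I_1)\bm{l}=0$, and reduce the $\bm{l}$-part to $\bm{e}_z\times\bigl(\beta\bm{l}+\gamma V'(a_z)\bm{a}\bigr)=0$; these are precisely the paper's equations \eqref{eq:2eq} (up to a trivial rescaling and the slack variable $\mu$), and your computation $\x_+\nabla_x a_z=2\,\bm{a}\times\bm{e}_3$ is correct. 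Your generic case is also correct and in fact more compact than the paper's: the single constraint $\mu\nu=I_1V'(a_z)$ for $\bm{l}=\mu\bm{a}+\nu\bm{e}_z$ subsumes the paper's separate cases $\bm{l}\parallel\bm{e}_z$ and $\bm{l}\parallel\bm{a}$ (both forcing $V'=0$). Note only that passing to $\mu=I_1V'/\beta$ assumes $\nu=\beta\neq 0$, so the $\nu=0$ points ($\bm{l}\parallel\bm{a}$, $V'=0$) are, as the paper remarks after its proof, only in the closure of the parametrised family.

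However, your treatment of the degenerate locus $\bm{a}=\pm\bm{e}_z$ would fail as written, even though its conclusions are right. Both supporting claims are false. First, the $\bm{l}$-equation there reads $\bm{e}_z\times(\beta\bm{l})=0$, i.e.\ $\beta\bm{l}\parallel\bm{e}_z$, which is \emph{not} automatic unless $\bm{l}\parallel\bm{e}_z$ or $\beta=0$. Second, the $\x$-equation alone does \emph{not} show that a nontrivial relation needs $\bm{l}\parallel\bm{e}_z$: at $\bm{a}=\varepsilon\bm{e}_z$ ($\varepsilon=\pm1$) it becomes $\bigl[\varepsilon(\alpha+\gamma\delta L_3/I_1)+\beta\bigr]\bm{e}_z+(\gamma/I_1)\bm{l}=0$, which for $\bm{l}\nparallel\bm{e}_z$ still admits the one-parameter family of solutions $(\alpha,\beta,\gamma)=(t,-\varepsilon t,0)$. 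Indeed your two claims are jointly inconsistent: an automatic $\bm{l}$-equation together with these residual solutions would make \emph{every} point over the poles critical, contradicting your own assertion that points with $\bm{l}\nparallel\bm{e}_z$ are regular. The correct argument (the paper's case 1) must use both equations together: for $\bm{l}\nparallel\bm{e}_z$ the $\x$-part forces $\gamma=0$ and $\beta=-\varepsilon\alpha$, and then the $\bm{l}$-part $\beta\bm{l}\parallel\bm{e}_z$ forces $\beta=0$, hence $\alpha=0$, so the point is regular; for $\bm{l}=m\bm{e}_z$ the $\bm{l}$-part is genuinely automatic and the $\x$-part is a single scalar condition, giving a two-dimensional space of relations, hence rank 1 and the parabolas \eqref{eqn:parabola}. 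With this repair your proof coincides with the paper's.
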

\begin{proof}
Notice that the last 3 components of $X_{V}$ can be written as
\[
  -\tfrac12 \x_+ \nabla_x V(a_z(x)) = -\bm{a} \times \nabla_a V(a_z) = -\bm{a} \times \bm{e}_z V'(a_z) \,.
\]
Using $
 \x_-^t \bm{e}_3 = \x_+^t R \bm{e}_3 = \x_+^t \bm{a}
$ in the flow of $L_3$, \eqref{eqn:combi} becomes
\rem{ 
\[
\begin{pmatrix}\frac{1}{2}\bm{x}_{+}^{t}((\alpha+\gamma\delta L_{3})\bm{a}+\beta\bm{e}_{z}+\gamma\bm{l})\\
-(\gamma I_{1}V^{'}\bm{a}+\mu\bm{e}_{z}+\beta\bm{l})\times\bm{e}_{z}
\end{pmatrix}=\begin{pmatrix}0\\
0
\end{pmatrix}.
\]
This means critical points of the momentum map occurs when 
\begin{equation}
\begin{aligned}(\alpha+\gamma\delta L_{3})\bm{a}+\beta\bm{e}_{z}+\gamma\bm{l} & =0\\
\gamma I_{1}V^{'}\bm{a}+\mu\bm{e}_{z}+\beta\bm{l} & =0
\end{aligned}
\end{equation}
for $\alpha$, $\beta$, $\gamma$ not all zero.
Hence the three vectors $\bm{a},\ \bm{l},\ \bm{e}_z$ are co-planar.\\
Since $\bm{a}$ and $\bm{e}_z$ never vanish, there is no rank 0 point. 

Special cases:
\begin{enumerate}
    \item  The sleeping top where $\bm{a} || \bm{l} || \bm{e}_z$: 
In this case all three vector fields are parallel, and these are all points with rank 1. They occur for $\bm{l} = m \bm{e}_z$ and $\bm{a}=\pm\bm{e}_z$ and the critical values of $H$ are given by the parabolas \eqref{eqn:parabola}.
    \item the...
\end{enumerate}

Equilibrium points of $X_H$ require $\bm{l} = 0$, and hence $l_z = L_3 = 0$.

For this to vanish either $V'(a_z) = 0$ or $a_x = a_y = 0$. 
The latter case is the sleeping top and the torus action is singular, rank 1.
The case with $V'(a_z) = 0$ is possible if $c_2 \not = 0$ and $|c_1/( 2 c_2)|  \le 1$ so that $|a_z| \le 1$. These equilibria are not isolated, and are regular points of the $T^2$ action.
Hence for each allowed value of $a_z$ there is a $T^2$ of equilibria. Hence for $c_2 \not = 0$ and $|c_1| < 2 |c_2|$ a rank 2 critical value sits between the minima of the two rank 1 parabolas \eqref{eqn:parabola}.

Other cases of rank 2 must involve all three vector fields. 
The vanishing of the last three components imposes $(-\beta \bm{l} - \gamma \bm{a} V'(a_z) ) \times \bm{e}_z = 0$ and hence a necessary condition for the rank to drop is
\begin{equation} \label{eqn:solcon}
        \beta \begin{pmatrix}
    l_x \\ l_y
    \end{pmatrix} + \gamma V'(a_z)
    \begin{pmatrix}
    a_x \\ a_y
    \end{pmatrix} = 0 \,.
\end{equation}
The first 4 components of \eqref{eqn:combi} can be rewritten by using \eqref{eqn:XL3} where
$
 \x_-^t \bm{e}_3 = \x_+^t R \bm{e}_3 = \x_+^t \bm{a}
$
and hence $\x_+^t$ can be factored out and since $\x_+^t$ does not have a kernel this  gives
\begin{equation} \label{eqn:laz}
    \alpha \bm{a} + \beta \bm{e}_z + \gamma (\bm{l} + \delta L_3 \bm{a} )/(2 I_1) = 0 \,.
\end{equation}
Comparing the first two components with \eqref{eqn:solcon} gives $( I_1 \alpha + \gamma \delta L_3)/\gamma = \gamma V'(a_z)/\beta$ and using this to eliminate $L_3$ in \eqref{eqn:laz} gives the parametrisation of $\bm{l}$ in terms of $\lambda$ and $a_z$ and hence the result.
} 
\[
\begin{pmatrix}\frac{1}{2}\bm{x}_{+}^{t}((\alpha+\gamma\delta L_{3})\bm{a}+\beta\bm{e}_{z}+\gamma\bm{l})\\
-(\gamma I_{1}V^{'}\bm{a}+\mu\bm{e}_{z}+\beta\bm{l})\times\bm{e}_{z}
\end{pmatrix}=\begin{pmatrix}0\\
0
\end{pmatrix}.
\]
This means critical points of the momentum map occur when 
\begin{equation}
\begin{aligned}(\alpha+\gamma\delta L_{3})\bm{a}+\beta\bm{e}_{z}+\gamma\bm{l} & =0\\
\gamma I_{1}V^{'}\bm{a}+\mu\bm{e}_{z}+\beta\bm{l} & =0.
\end{aligned}
\label{eq:2eq}
\end{equation}
for $\alpha$, $\beta$, $\gamma$ not all zero.
Hence the three vectors $\bm{a},\ \bm{l},\ \bm{e}_z$ are co-planar.
Since $\bm{a}$ and $\bm{e}_z$ never vanish, there is no rank 0 point. 
We have the following cases.
\begin{enumerate}
\item $\bm{a}\parallel\bm{e}_{z}$. 
This means $a_{x}=a_{y}=0$ and $a_{z}=\pm1$.
If $\bm{l} \neq 0$ and $\bm{l}$ not parallel to $\bm{e_{z}}$, then linear independence \eqref{eq:2eq} implies $\alpha=\beta=\gamma=0$.
Hence $\bm{l}\parallel\bm{e}_{z}\parallel\bm{a}$ (including $\bm{l} = 0$), and we can use $l_z=m$ as parameter and thus showed the parametrisation of the sleeping tops \eqref{eqn:parabola} in the Lemma.
Recall that these are the points where the torus action is singular.
All points along these parabolas have rank~1.
Parts of these parabolas may be isolated threads of focus-focus type, while others form the edges of the surface of elliptic-elliptic type.

\item $\bm{l}\parallel\bm{e}_{z}$. 
Set $\bm{l}=\lambda\bm{e}_{z}$.
If $\bm{a}\parallel\bm{e}_{z}$ then this gives the sleeping top solution again. If $\bm{a}$ is not parallel to $\bm{e}_{z}$ then by linear independence (\ref{eq:2eq}) implies 
$\alpha+\gamma\delta L_{3}=\beta+\gamma\lambda = 0$
 and 
 $I_{1}V^{'}=\mu+\beta\lambda = 0$.
If $\gamma=0$ then this forces the trivial solution $\alpha=\beta=0$ so $\gamma \not = 0$ and
$V^{'}=0$. Since $V^{'}=c_{1}+2c_{2}a_{z}$ this means $a_{z}=\frac{-c_{1}}{2c_{2}}\eqqcolon a_{z0}$.
Normalising $\gamma = -1$ gives $\lambda = \beta$
and hence $\bm{l}=\beta \bm{e}_{z}$,
$\bm{a}=(a_{x},a_{y},a_{z0})$,
$L_3 = \bm{l} \cdot \bm{a} = \beta a_{z0}$
and hence using $\beta = m$ as parameter gives
\[
(l_{z},L_{3},H)= \left(m,m a_{z0}, \frac{m^2}{2I_{1}}(1+\delta a_{z0}^{2})+V(a_{z0})\right).
\]
Since $|a_z| \le 1$ this parabola only exists when $|c_1| < 2 |c_2|$, while for $|c_1| = \pm 2 |c_2|$ it merges with 
the sleeping tops. Otherwise its vertex lies between the vertices of the parabolas of sleeping tops.
The vertex of this parabola $\beta = 0$ is an equilibrium point of $X_H$.

\item $\bm{l}\parallel\bm{a}$. This forces $\bm{l}=\lambda\bm{a}=L_{3}\bm{a}$.
If $\bm{a}\parallel\bm{e}_{z}$ then this gives the sleeping top solution again.
If $\bm{a}$ is not parallel to $\bm{e}_{z}$ then
linear independence (\ref{eq:2eq}) implies
$ \alpha + \gamma (\delta+1) L_3  = \beta = 0$
and
$ \gamma I_1 V' + \beta L_3 = \mu = 0$. Again $\gamma = 0$ gives the trivial solution, so we can normalise $\gamma = -1$ and find 
$\alpha = (1 + \delta) L_3$
and $V' = 0$, as in case 2.
Using $L_3=k$ as parameter gives
\[
(l_{z},L_{3},H)=
\left( k a_{z0},k,\frac{k^{2}}{2I_1}(\delta+1)+V(a_{z0})\right).
\]
Existence and limiting behaviour is as in case~2.
The vertex of this parabola coincides with that of case~2.

\item General case where no pair of vectors is parallel. If $\gamma=0$ then this gives $\alpha=\beta=0$ while if
$\beta=0$ then this gives case 2. We now assume $\beta\ne0$ and
$\gamma\ne0$. 
Eliminating $\bm{l}$ from \eqref{eq:2eq} and using linear independence gives
$\mu=\frac{\beta^{2}}{\gamma}$ and 
$\alpha + \gamma\delta L_{3}=\frac{\gamma^2}{\beta} I_1 V'$.
Using this to eliminate $L_3$ in \eqref{eq:2eq}
gives  
$-\bm{l}=\frac{\gamma}{\beta} I_{1}V^{'}\bm{a}+\frac{\beta}{\gamma}\bm{e}_{z}$.
Normalising $\gamma = -1$ computing $l_z = \bm{l} \cdot \bm{e}_3$, $L_3 = \bm{l} \cdot \bm{a}$, 
and $\bm{l}^2 = \bm{l} \cdot \bm{l}$ gives the result. Notice that $\beta$ is the angular velocity of the angle $\phi$ conjugate of $p_\phi$.
\end{enumerate}
\end{proof}

\begin{figure}
\begin{centering}
\includegraphics[width=6cm,height=6cm,keepaspectratio]{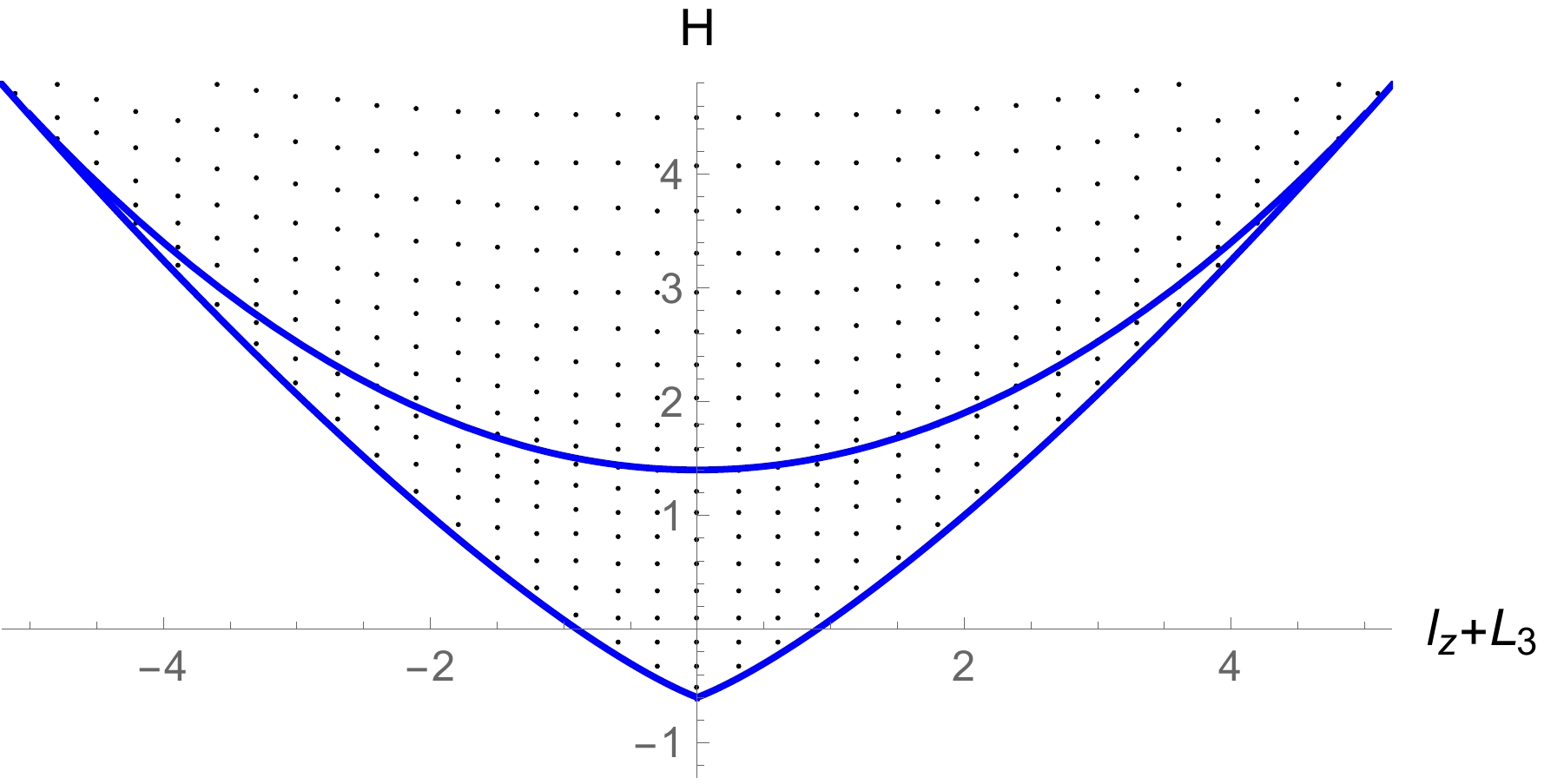}\includegraphics[width=6cm,height=6cm,keepaspectratio]{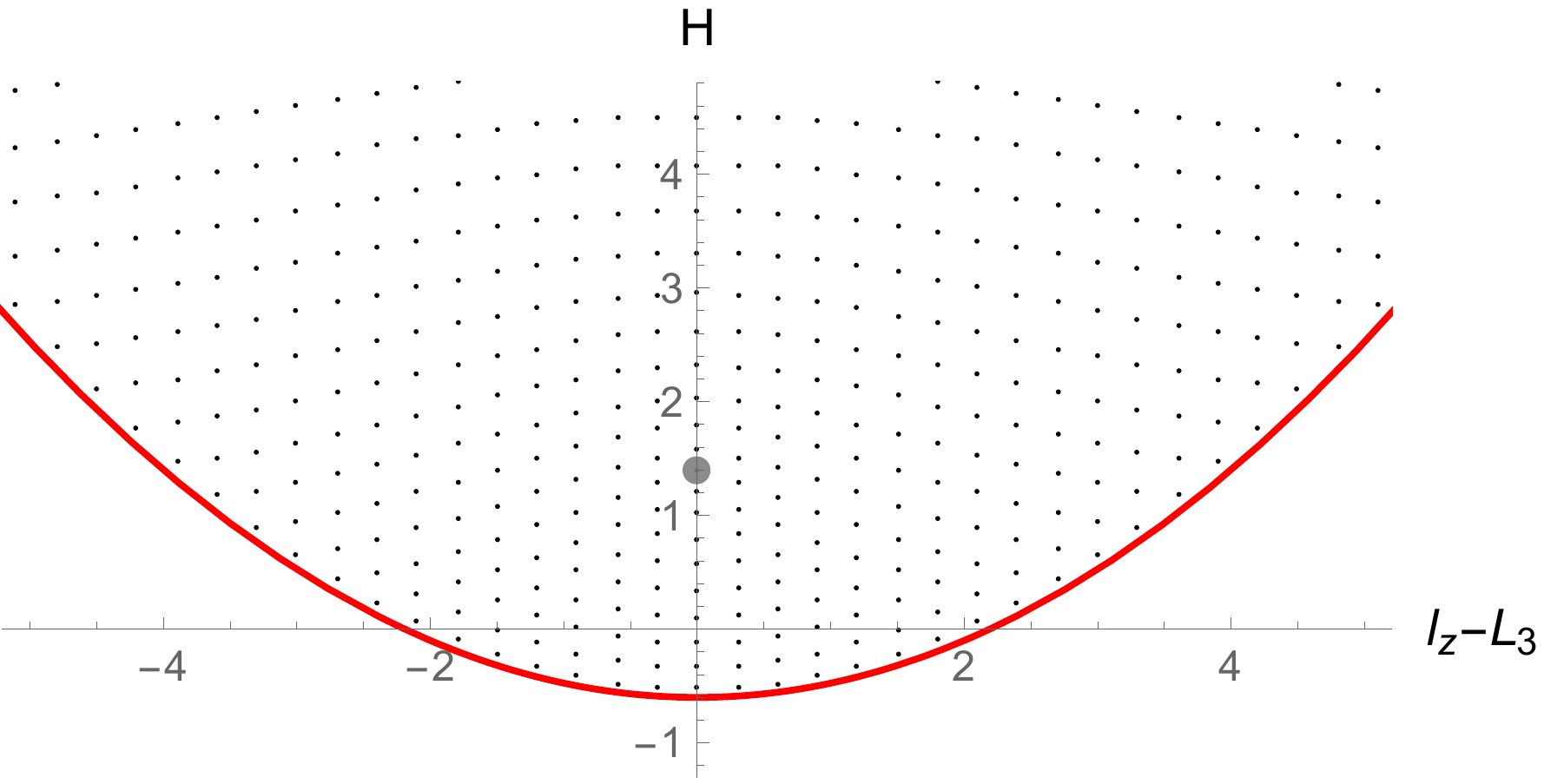}\\
\includegraphics[width=6cm,height=6cm,keepaspectratio]{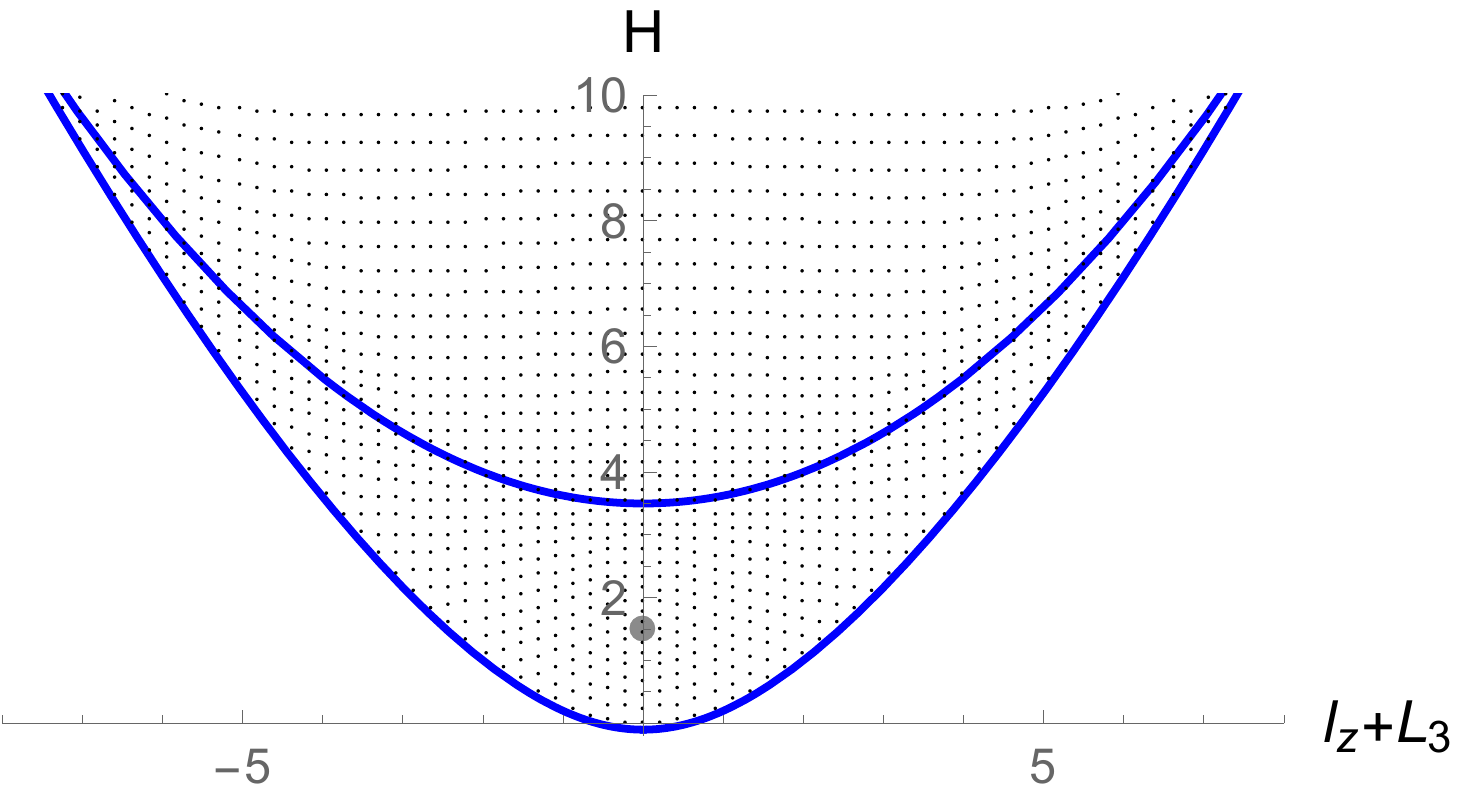}\includegraphics[width=6cm,height=6cm,keepaspectratio]{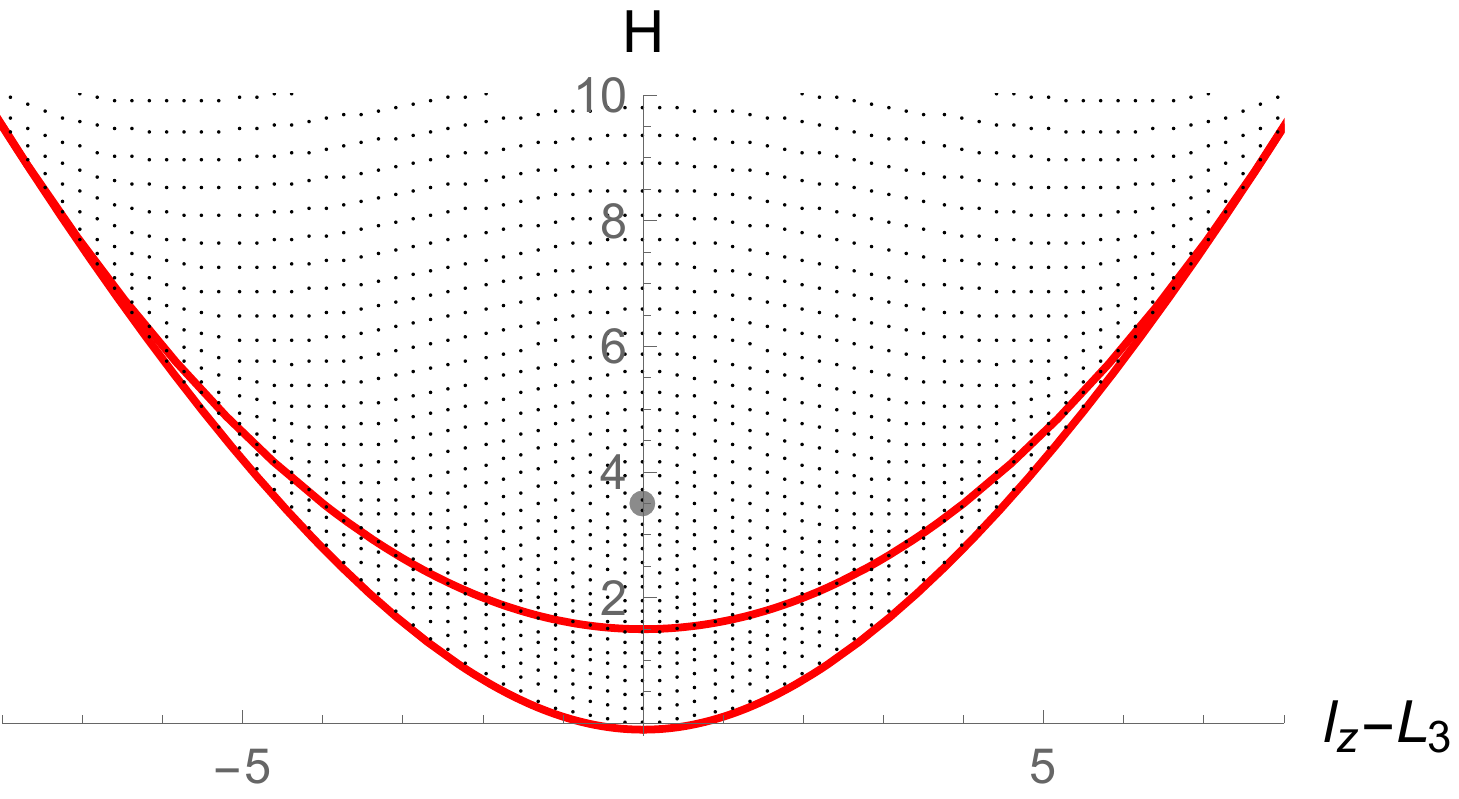}\\
\includegraphics[width=6cm,height=6cm,keepaspectratio]{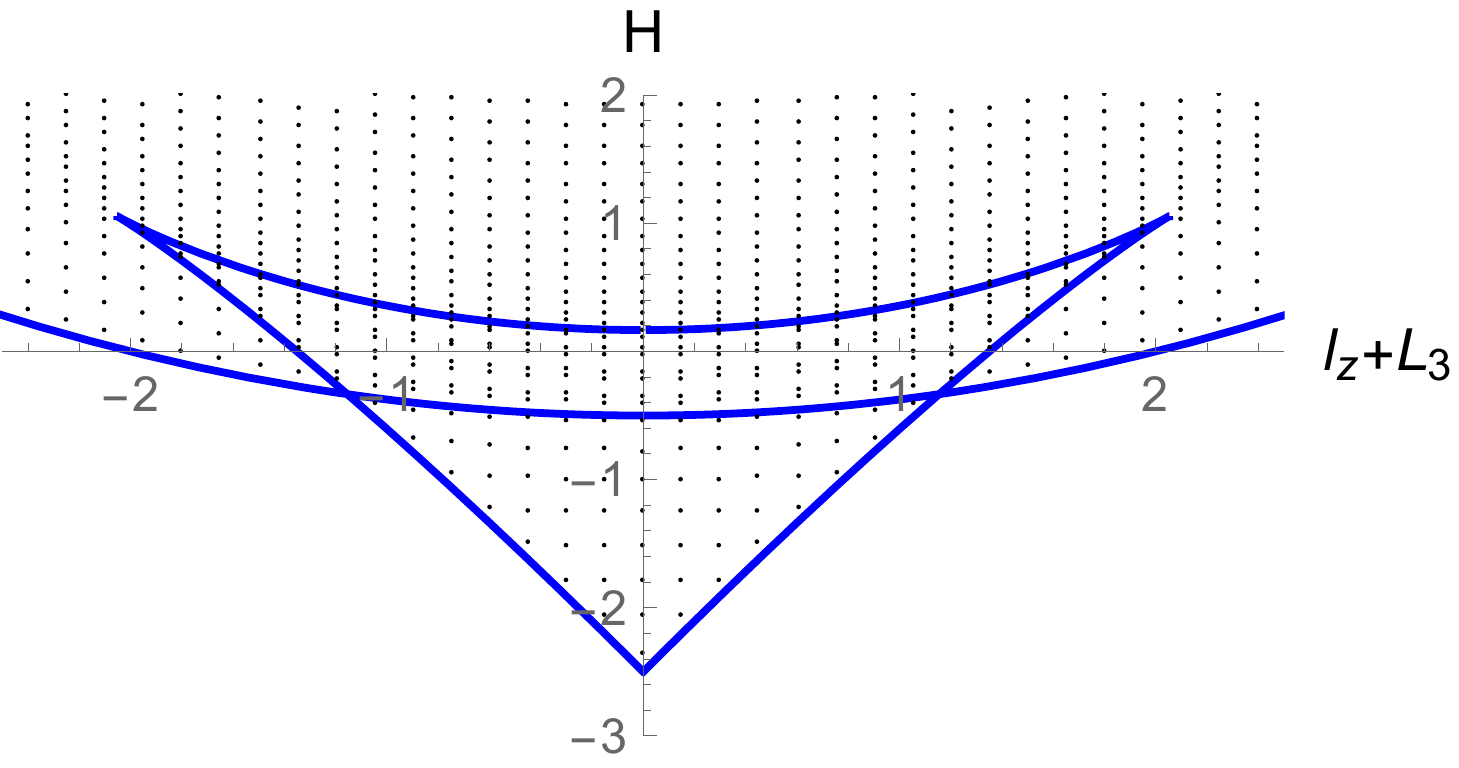}\includegraphics[width=6cm,height=6cm,keepaspectratio]{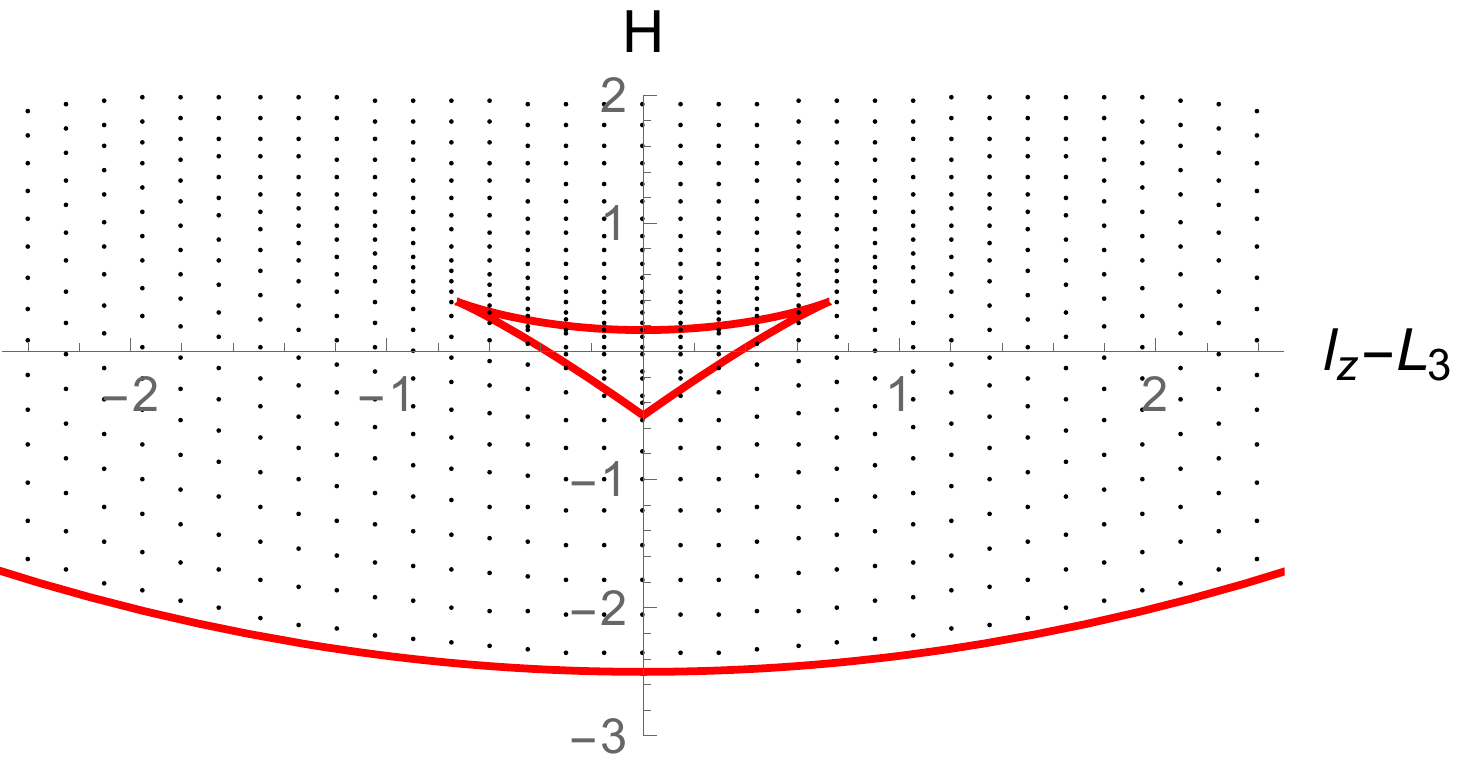}
\par\end{centering}
\caption{Slices along $l_{z}-L_{3}=0$ (blue) and $l_{z}+L_{3}=0$ (red) for Fig.~\ref{fig:3d examples} a) through
c) using $\hbar=(0.15,0.11,0.075)$ respectively. \label{fig:allslices}}

\end{figure}

Note that the two special cases 2 and 3 are parabolas that are embedded in the surface of critical values. Unlike the parabolas \eqref{eqn:parabola} the rank of these points is 2. 
The parabola from case~3 is in the closure of the generic case in the singular limit that $\beta \to 0$.


The rational parametrisation is also useful when using the Euler angles. 
Inserting the parametrisation into the condition for an equilibrium point $H_\theta = 0$ it is identically satisfied. To determine the stability of the equilibrium we evaluate the second derivative $H_{\theta\theta}$ on the rational parametrisation and find
\[
  H_{\theta\theta}(\beta, a_z) = \beta^2 I_1 - 2 a_z V'(a_z) + (1 - a_z^2) V''(a_z) + \frac{1}{\beta^2 I_1} V'(a_z)^2.
\]
Equating this to zero gives a relation between $\beta$ and $a_z$ which determines degenerate values in the bifurcation diagram. These are the cusp-shaped edges of the triangular tubes in Fig.~1c,d.
The most degenerate situation occurs when simultaneously the 2nd and the 3rd $\theta$-derivative of $H$ vanish. This occurs for the special parameter values 
$a_z = -c_1/(2 c_2)$, $\beta^2 = -c_1^2 / ( 8 c_2 I_1)$
and 
$a_z = -c_1/(4 c_2)$, $\beta^2 = c_1^2 / ( 2 c_2 I_1) - 2 c_2 / I_1$.
When these degenerate values for $a_z$ collide with $\pm 1$ then the degenerate points disappear and the topological structure of the bifurcation diagram changes.
This occurs for $c_1 = \pm 2 c_2$ and $c_1 = -4 c_2$. 
The plus sign yields imaginary $\beta$.
The sign of $c_1$ can be made positive by the original choice of body coordinate system.
Hence there are 4 topologically distinct cases illustrated in Fig.~1: 
$c_2/c_1 < -1/2$: triangular tube (1c);
$-1/2 < c_2/c_1 < -1/4$: triangular tube shrinking to a thread (1d);
$-1/4 < c_2/c_1 < 1/2$: one thread (1a);
$c_2/ c_1 > 1/2$ two threads (1a).

To understand the figures corresponding to these 4 cases it helps to consider how they bifurcate into each other. We stress again that we always consider $\delta = 0$, because adding the additional quadratic term in $L_3$ to the Hamiltonian deforms the bifurcation diagram, but does not essentially change it. 
Bifurcations similar to those found here have recently been described in \cite{SadovskiiZhilinskii07,Efstathiou19}, in particular also the related quantum monodromy in \cite{SadovskiiZhilinskii07}.
Let us start with the ordinary Lagrange top, $c_2 = 0$ \cite{Cushman85}. 
The bifurcation diagram for the harmonic Lagrange top is topologically the same for $-1/4 < c_2 /c_1 < 1/2$.
The outer surface is a bowl that has at least two corners when cut at constant energy. For high energy there are four corners, while for low energy only two. The transition is a supercritical Hopf bifurcation where the sleeping top becomes stable. A thread of isolated critical values detaches at that point of the surface. This thread is shown in blue in Fig.~1a. In Fig.~2 slices through the 3-dimensional bifurcation diagram are shown. The blue curves is a slice with $l_z - L_3 =0$ which contains the thread, while in the other slice $l_z + L_3 = 0$ the thread appears as a single isolated point. In these figures we also show the quantum spectrum, see the next section.
This situation persists for non-zero $c_2$ not too large. Changes occur for $c_2/c_1 = -1/4$ and for $c_2/c_1 = 1/2$. For $c_2/c_1 > 1/2$, a second thread emerges from the minimum of $H$, as show in Fig.~1b and Fig.~2b. For low energies, the outer surface has no corners at all. For intermediate energy as visible at the top of Fig.~1b, there are 2 corners above where the red thread is attached, but the blue thread is not yet attached and the outer surface is still smooth. For high energies, there are 4 corners.
A more dramatic change occurs when decreasing $c_2/c_1$ through $-1/4$.
All attachment points of the threads in the two cases discussed so far are supercritical Hopf bifurcations. When passing $-1/4$, the Hopf bifurcation turns into a subcritical Hopf bifurcation. The attachment point is replaced by a tube with triangular cross section that eventually contracts to a point and becomes a thread, as shown in Fig.~1d (zoomed in).
When decreasing $c_2/c_1$ further, the two subcritical Hopf bifurcation values collide
when $c_2/c_1 = -1/2$, and merge into a triangular tube shown in Fig.~1c.
In this figure, the bounding box is chosen such that it cuts away parts of the surface facing the viewer so that the triangular tube becomes visible.
The 0-slices are shown in Fig.~2c. The two bottom surfaces of the tube correspond to elliptic 2-tori, while the top surface of the tube corresponds to hyperbolic 2-tori. The top surface joins the bottom surfaces along a line of cusps where $H_{\theta\theta} = 0$. The merging of the triangular tube with the outer surface is illustrated in additional slices in Fig.~3.


\rem{
Another interpretation of the set of critical values of the energy momentum map is as follows.
The differential $p_\theta d\theta$ where $p_\theta$ is determined by $H(\theta, p_\theta) = h$ is a differential on an elliptic curve. The Lemma gives a rational parametrisation of the part of the discriminant surface of this curve that has $|a_z| \le 1$.
}

\rem{
The Lagrange top with linear and quadratic potential is described by the polynomial
\[
  P_4(x; u,v,w; a,b,c) = 
      (u - v x)^2 - 2 ( 1 - x^2 ) ( w - a v^2 - b x - c x^2)
\]
Flipping the sign of $b$ and flipping the sign of $x$ and $v$ leaves the polynomial invariant,
so we restrict to positive $b$.

\begin{theorem}
The discriminant surface of $P_4$ for $|x| \le 1$ in the space $(u,v,w)$ 
has 4 different topological types depending on the parameters $b, c$.
The critical values of $c/|b|$ are $-1/2, -1/4, 1/2$.
The corresponding cases are triangular tube, supercritical Hopf (triangle and thread), 1 thread, 2 threads. 

%
\end{theorem}

\begin{proof}
Find the double roots by equating coefficients of $x$ in 
\[
   P_4(x) = -2c (x  - \delta)^2 P_2(x), \quad P_2(x) =  x^2 + \alpha x + \beta
\]
and solve for $w, \alpha$ to find
\[
   w = \frac12( u^2 + c v^2 + \beta \delta^2 c), \quad
   \alpha = 2 \delta + b/c \,.
\]
The surface in $(u,v,w)$ space where double roots $x=\delta$ occur can 
be parametrised by $\delta$ and $b$. Since $x = \cos\theta$ we are only 
interested in real roots that satisfy $|x| \le 1$. 
Eliminating $w$ and $\alpha$ there are two equations remaining that can be solved to give
\begin{equation}
\label{eqn:uvis}
(u - v)^2 = 2(1 - \delta)^2 ( -b - c (1 + \beta + 2 \delta ) ), \quad
(u + v)^2  = 2(1 + \delta^2 ( b - c( 1 + \beta  - 2 \delta ) ) \,.
\end{equation}
Define the critical values of $\beta$ by $u\pm v = 0$, such that
\[
    \beta_c^\pm = -1 \pm ( 2 \delta + b/c)\,.
\]
Notice that $P_2(\pm 1) = 0$ implies $\beta = \beta_c^\pm$.
The two critical values coincide when $\delta = -b/(2 c)$.

The roots of $P_2(x)$ collide when (after eliminating $\alpha$) 
its discriminant $ \beta - ( \delta +  b/(2c)  )^2$ vanishes.
The discriminant curve of $P_2$ touches the critical lines of $\beta$ at 
$\pm 1 - b/(2c)$. In the plane $(\delta, \beta)$ the ``triangle'' formed by 
the two straight lines $\beta = \beta_c^\pm$ and the parabola $\beta = - ( \delta +  b/(2c)  )^2$
is the region where in addition to the double root at $x=\delta$ there are
two roots in the interval $(-1,1)$.

First consider positive $c$. Real solutions to \eqref{eqn:uvis} occur for 
\[
     b \le \min( \beta_c^+, \beta_c^-) \,.
\]
There are two cases: either the intersection of the two lines $\beta_c^\pm$ at
$\delta = -b/(2c)$ lies within $(-1, 1)$, then there are two threads, or it lies
outside this interval, then there is only one thread. The critical case occurs 
when $-b/(2c) = -1$. Thus for $b/c > 2$ there is one thread, while for 
$0 < b/c < 2$ there are two. When $\delta = \pm1$ the possible range of 
$b$ that makes \eqref{eqn:uvis} real is larger because one of the two equations
is identically zero, and the other one is real for $b \le \max( \beta_c^+, \beta_c^-)$.

Now consider negative $c$. Real solutions to \eqref{eqn:uvis} occur for 
\[
   b \ge \max( \beta_c^+, \beta_c^-) \,.
\]
There are three cases: As above if $-1 < -b/(2c) < 1$

\end{proof}
} 

\rem{
\section{Quaternion Deriviation of Momentum Map}

We already stated that the vector fields $X_{L_3}$ and $X_{l_z}$ become parallel when $l_x = l_y = 0$ and either $x_0 = x_3 = 0$ or $x_1 = x_2 = 0$, corresponding to $a_z = -1$ and $a_z = +1$, respectively. Both of these cases align the figure axis $\bm{a}$ and the direction of gravity $\bm{\gamma}$.

Using $\x_-^t = \x_+^t \bm{a}$ the linear combination of the vector fields $X_{H},X_{L_{3}}$ and $X_{l_{z}}$ 
can be written as
\begin{equation}
X\coloneqq\alpha X_{L_{3}}+\beta X_{l_{z}}+\gamma X_{H}=\begin{pmatrix}\frac{\alpha}{2}\bm{x}_{+}^{t}\bm{a}+\frac{\beta}{2}\bm{x}_{+}^{t}\bm{e}_{z}+\frac{\gamma}{2I_{1}}\bm{x}_{+}^{t}\bm{l}+\frac{\gamma\delta L_{3}}{2I_{1}}\bm{x}_{+}^{t}\bm{a}\\
-\beta\bm{l}\times\bm{e}_{z}-\frac{\gamma}{2}\bm{x}_{+}^{t}\nabla_{\bm{x}}V
\end{pmatrix}\label{eq:combined flow}
\end{equation}
where $V=V(a_{z})=c_{1}a_{z}+c_{2}a_{z}^{2}$ and $a_{z}=x_{0}^{2}-x_{1}^{2}-x_{2}^{2}+x_{3}^{2}$.
Solving the first three rows of (\ref{eq:combined flow})
gives 
\begin{equation}
\begin{aligned}\frac{\alpha}{2}\bm{x}_{+}^{t}\bm{a}+\frac{\beta}{2}\bm{x}_{+}^{t}\bm{e}_{z}+\frac{\gamma}{2I_{1}}\bm{x}_{+}^{t}\bm{l}+\frac{\gamma\delta L_{3}}{2I_{1}}\bm{x}_{+}^{t}\bm{a} & =0\\
\bm{x}_{+}^{t}(\frac{\alpha}{2}\bm{a}+\frac{\beta}{2}\bm{e}_{z}+\frac{\gamma}{2I_{1}}\bm{l}+\frac{\gamma\delta L_{3}}{2I_{1}}\bm{a}) & =0\\
\Rightarrow\frac{\gamma}{2I_{1}}\bm{l} & =-(\frac{\alpha}{2}\bm{a}+\frac{\beta}{2}\bm{e}_{z})-\frac{\gamma\delta L_{3}}{2I_{1}}\bm{a}
\end{aligned}
\label{eq:working}
\end{equation}
If we assume $\gamma=0$ in (\ref{eq:combined flow}) then we have
\begin{equation}
\begin{pmatrix}\bm{x}_{+}^{t}(\frac{\alpha}{2}\bm{a}+\frac{\beta}{2}\bm{e}_{z})\\
-\beta\bm{l}\times\bm{e}_{z}
\end{pmatrix}=\begin{pmatrix}\bm{0}\\
\bm{0}
\end{pmatrix}\label{eq:gamma 0}
\end{equation}
 which forces $\bm{l}\times\bm{e}_{z}=\bm{0}$ and so $l_{x}=l_{y}=0$.
Similarly, the first three entries of (\ref{eq:gamma 0}) gives $a_{x}=a_{y}=0$
and so $a_{z}=\pm1$ and $l_{z}$ is free. If $a_{z}=1$ then $K=0,M=2l_{z}$ and 
\[
H=\frac{1}{2I_{1}}((\frac{M}{2})^{2}+\delta(\frac{M}{2})^{2})+c_{1}+c_{2}.
\]
 Similarly, if $a_{z}=-1$ then we have $M=0,K=-2l_{z}$ and 
\[
H=\frac{1}{2I_{1}}((\frac{K}{2})^{2}+\delta(\frac{K}{2})^{2})-c_{1}+c_{2}.
\]
Assuming $\gamma\ne0$ then
\begin{equation}
\begin{aligned}\bm{l} & =\frac{2I_{1}}{\gamma}\left[-(\frac{\alpha}{2}+\frac{\gamma\delta L_{3}}{2I_{1}})\bm{a}-\frac{\beta}{2}\bm{e}_{z}\right].
\end{aligned}
\label{eq:l vec eqn}
\end{equation}
To rewrite (\ref{eq:l vec eqn}) we now set the last three rows of
(\ref{eq:combined flow}) zero and solve. We observe that $\bm{l}\times\bm{e}_{z}=(l_{y},-l_{x},0)$
and 
\begin{equation}
\begin{aligned}\nabla_{\bm{x}}V & =V^{'}(a_{z})\nabla_{\bm{x}}a_{z}\\
 & =2V^{'}(x_{0},-x_{1},-x_{2},x_{3})^{t}\\
\Rightarrow\bm{x}_{+}^{t}\nabla_{\bm{x}}V & =2V^{'}\bm{x}_{+}^{t}(x_{0},-x_{1},-x_{2},x_{3})^{t}\\
 & =2V^{'}(a_{y},-a_{x},0)^{t}
\end{aligned}
\label{eq:nabla vanish}
\end{equation}
where $V^{'}=V^{'}(a_{z})=c_{1}+2c_{2}a_{z}$ and $(a_{x},a_{y})=2(x_{1}x_{3}-x_{0}x_{2},x_{0}x_{1}+x_{2}x_{3})$.
Substituting these expressions into (\ref{eq:combined flow}) gives
\begin{equation}
\begin{aligned}\begin{pmatrix}-\beta l_{y}-\frac{\gamma}{2}(2V^{'}a_{y})\\
\beta l_{x}+\frac{\gamma}{2}(2V^{'}a_{x})\\
0
\end{pmatrix} & =\bm{0}.\end{aligned}
\label{eq:lxly rel}
\end{equation}
From (\ref{eq:l vec eqn}) with $\beta=0$ we have 
\[
\begin{aligned}L_{3} & =\bm{l}\cdot\bm{a}\\
 & =-\frac{I_{1}}{\gamma}(\alpha+\frac{\gamma\delta L_{3}}{I_{1}})\\
\Rightarrow L_{3} & =\frac{-I_{1}}{1+\delta}\frac{\alpha}{\gamma}.
\end{aligned}
\]
We therefore have with $L_{3}=L_{3}(\frac{\alpha}{\gamma})$:
\begin{equation}
\begin{aligned}\bm{l}\cdot\bm{l} & =(-\frac{I_{1}}{\gamma}(\alpha+\frac{\gamma\delta L_{3}}{I_{1}})\bm{a})\cdot(-\frac{I_{1}}{\gamma}(\alpha+\frac{\gamma\delta L_{3}}{I_{1}})\bm{a})\\
 & =(\frac{I_{1}}{\gamma})^{2}(\alpha+\frac{\gamma\delta L_{3}}{I_{1}})^{2}\\
\Rightarrow H & =\frac{1}{2I_{1}}(\bm{l}\cdot\bm{l}+\delta L_{3}^{2})+V(a_{z})\\
\Rightarrow H(\frac{\alpha}{\gamma}) & =\frac{1}{2I_{1}}((I_{1})^{2}(\frac{\alpha}{\gamma}+\frac{\delta L_{3}}{I_{1}})^{2}+\delta L_{3}^{2})+V(-\frac{c_{1}}{2c_{2}})\\
\Rightarrow l_{z} & (\frac{\alpha}{\gamma})=-I_{1}(\frac{\alpha}{\gamma}+\frac{\delta L_{3}}{I_{1}})a_{z}.
\end{aligned}
\label{eq:partube}
\end{equation}
From (\ref{eq:partube}) we see that the triple $(l_{z},L_{3},H)$
is parameterised by the ratio $\alpha/\gamma$. The more convenient $S^{1}$ actions are $(M,K)=(L_{3}+l_{z},L_{3}-l_{z})$
which we parameterised as 
\begin{equation}
\begin{aligned}M(\frac{\alpha}{\gamma}) & =(l_{z}+L_{3})\\
 & =-I_{1}(\frac{\alpha}{\gamma}+\frac{\delta L_{3}}{I_{1}})a_{z}+L_{3}\\
K(\frac{\alpha}{\gamma}) & =l_{z}-L_{3}\\
 & =-I_{1}(\frac{\alpha}{\gamma}+\frac{\delta L_{3}}{I_{1}})a_{z}-L_{3}.
\end{aligned}
\label{eq:MK-1}
\end{equation}

This forms the green line joining the surfaces in Fig.~1a).

If $\beta \ne 0$, then comparing (\ref{eq:lxly rel}) with (\ref{eq:l vec eqn}) we obtain
\begin{equation}
\begin{aligned}l_{x} & =-\frac{2I_{1}}{\gamma}(\frac{\alpha}{2}+\frac{\gamma\delta L_{3}}{2I_{1}})a_{x}=-\frac{\gamma}{\beta}V^{'}a_{x}\\
l_{y} & =-\frac{2I_{1}}{\gamma}(\frac{\alpha}{2}+\frac{\gamma\delta L_{3}}{2I_{1}})a_{y}=-\frac{\gamma}{\beta}V^{'}a_{y}\\
\Rightarrow-\frac{2I_{1}}{\gamma}(\frac{\alpha}{2}+\frac{\gamma\delta L_{3}}{2I_{1}}) & =-\frac{\gamma}{\beta}V^{'}\\
\Rightarrow\bm{l} & =-\frac{\gamma}{\beta}V^{'}\bm{a}-\frac{I_{1}\beta}{\gamma}\bm{e}_{z}\\
\Rightarrow\bm{l} & =\lambda V^{'}\bm{a}+\frac{I_{1}}{\lambda}\bm{e}_{z} &  & \lambda=-\frac{\gamma}{\beta}.
\end{aligned}
\label{eq:l nice}
\end{equation}
Substituting (\ref{eq:l nice}) into the integrable system $(l_{z},L_{3},H)$
we obtain 
\begin{equation}
\begin{aligned}l_{z}(\lambda,a_{z}) & =\bm{l}\cdot\bm{e}_{z}=\lambda V^{'}a_{z}+\frac{I_{1}}{\lambda}\\
L_{3}(\lambda,a_{z}) & =\bm{l}\cdot\bm{a}=\lambda V^{'}+\frac{I_{1}}{\lambda}a_{z}\\
H(\lambda,a_{z}) & =\frac{1}{2I_{1}}(\bm{l}\cdot\bm{l}+\delta L_{3}^{2})+V(a_{z})\\
 & =\frac{1}{2I_{1}}((\lambda V^{'}\bm{a}+\frac{I_{1}}{\lambda}\bm{e}_{z})\cdot(\lambda V^{'}\bm{a}+\frac{I_{1}}{\lambda}\bm{e}_{z})+\delta(\lambda V^{'}+\frac{I_{1}}{\lambda}a_{z})^{2})+V(a_{z})\\
 & =\frac{1}{2I_{1}}((\lambda^{2}V^{'2}+2I_{1}V^{'}a_{z}+\frac{I_{1}^{2}}{\lambda^{2}})+\delta(\lambda V^{'}+\frac{I_{1}}{\lambda}a_{z})^{2})+V(a_{z}).
\end{aligned}
\label{eq:combined parameterisation}
\end{equation}
The more convenient $S^{1}$ actions are $(M,K)=(L_{3}+l_{z},L_{3}-l_{z})$
which we parameterised as 
\begin{equation}
\begin{aligned}M & =(l_{z}+L_{3})\\
 & =\lambda V^{'}+\frac{I_{1}}{\lambda}a_{z}+\lambda V^{'}a_{z}+\frac{I_{1}}{\lambda}\\
 & =\lambda V^{'}(1+a_{z})+\frac{I_{1}}{\lambda}(1+a_{z})\\
 & =(1+a_{z})(\lambda V^{'}+\frac{I_{1}}{\lambda})\\
K & =l_{z}-L_{3}\\
 & =(\lambda V^{'}a_{z}+\frac{I_{1}}{\lambda})-(\lambda V^{'}+\frac{I_{1}}{\lambda}a_{z})\\
 & =\lambda V^{'}(a_{z}-1)+\frac{I_{1}}{\lambda}(1-a_{z})\\
 & =(1-a_{z})(-\lambda V^{'}+\frac{I_{1}}{\lambda}).
\end{aligned}
\label{eq:MK}
\end{equation}
} 

\begin{figure}
\includegraphics[width=4cm,height=4cm,keepaspectratio]{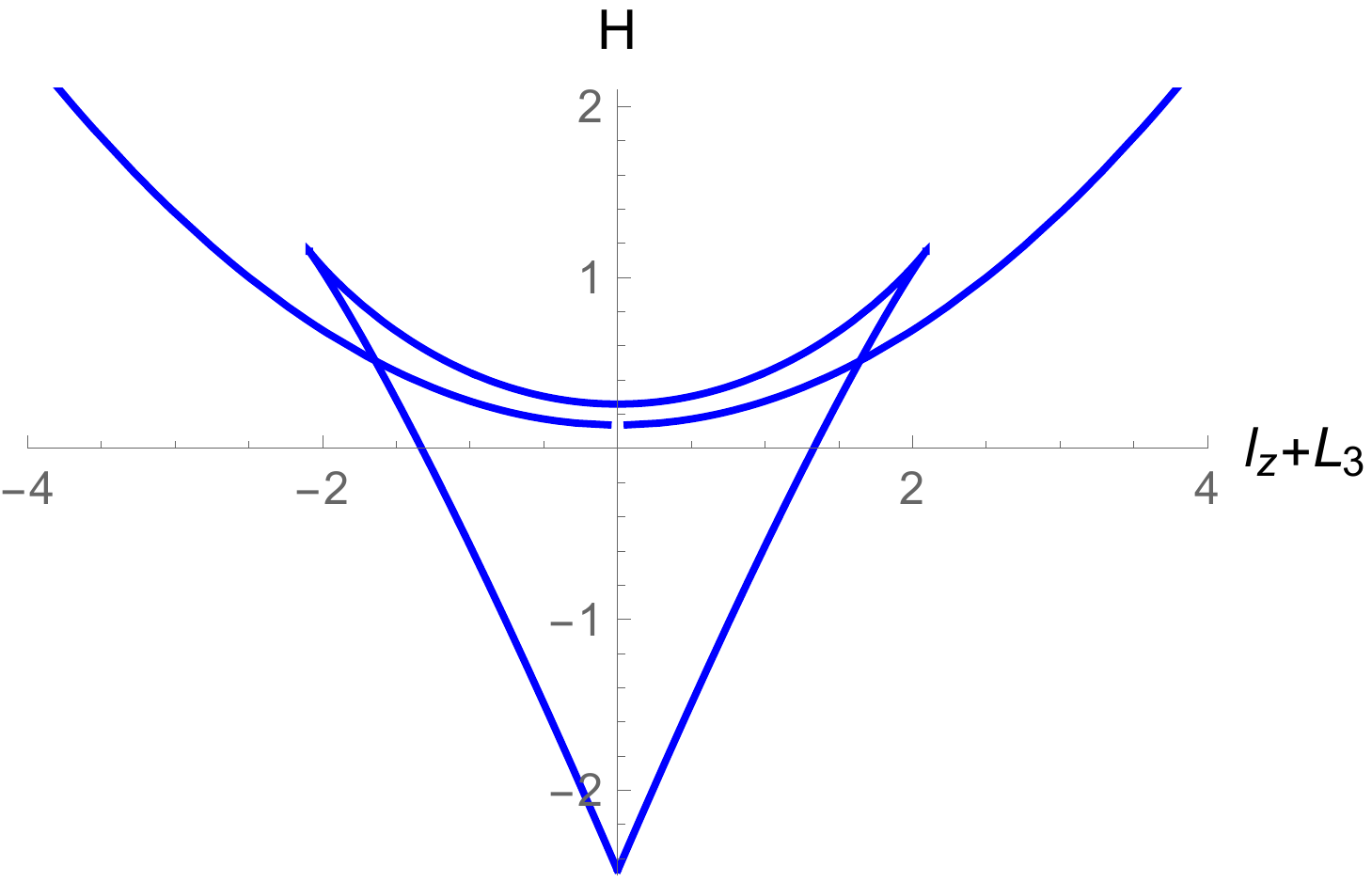}\includegraphics[width=4cm,height=4cm,keepaspectratio]{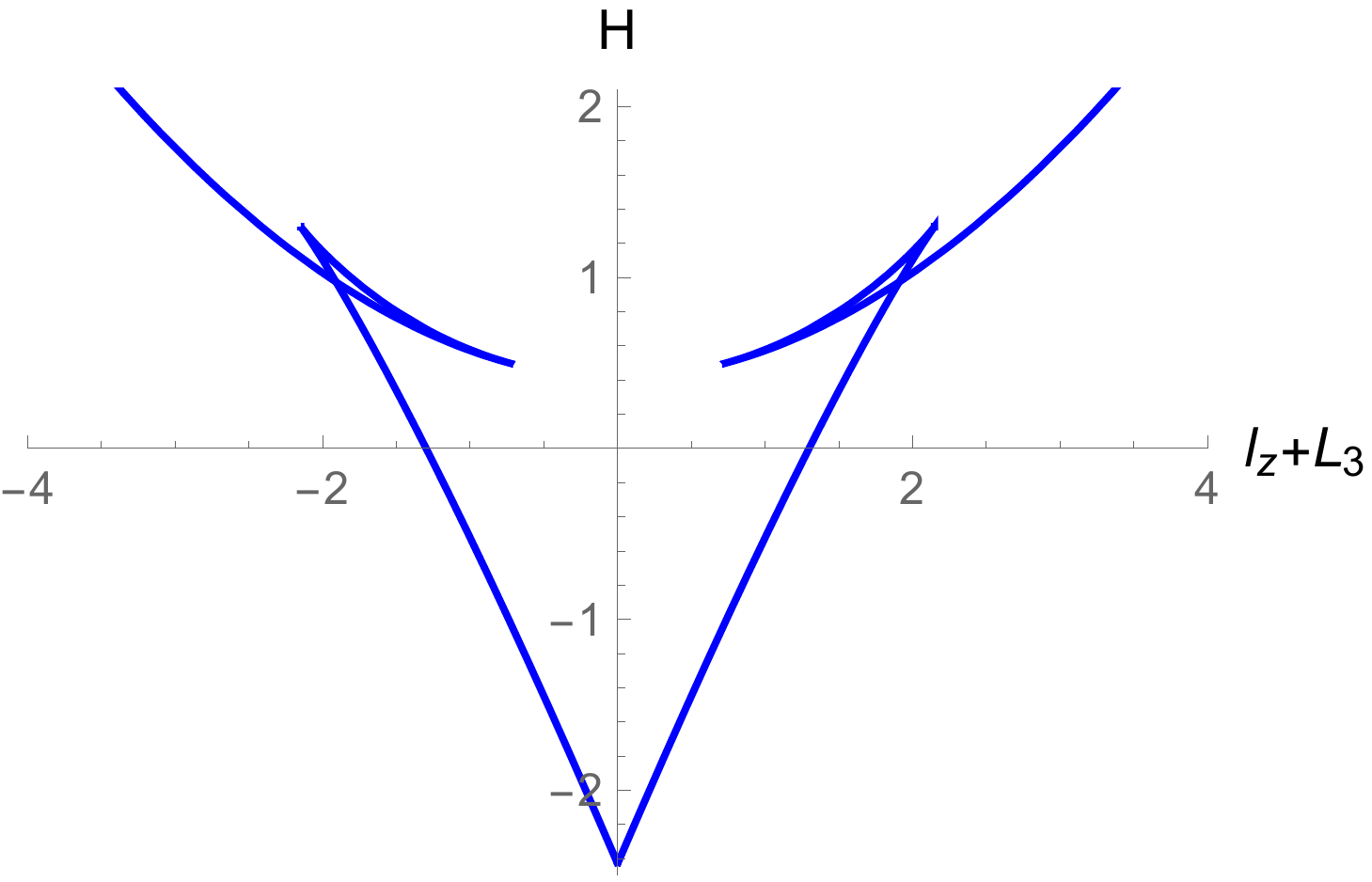}\includegraphics[width=4cm,height=4cm,keepaspectratio]{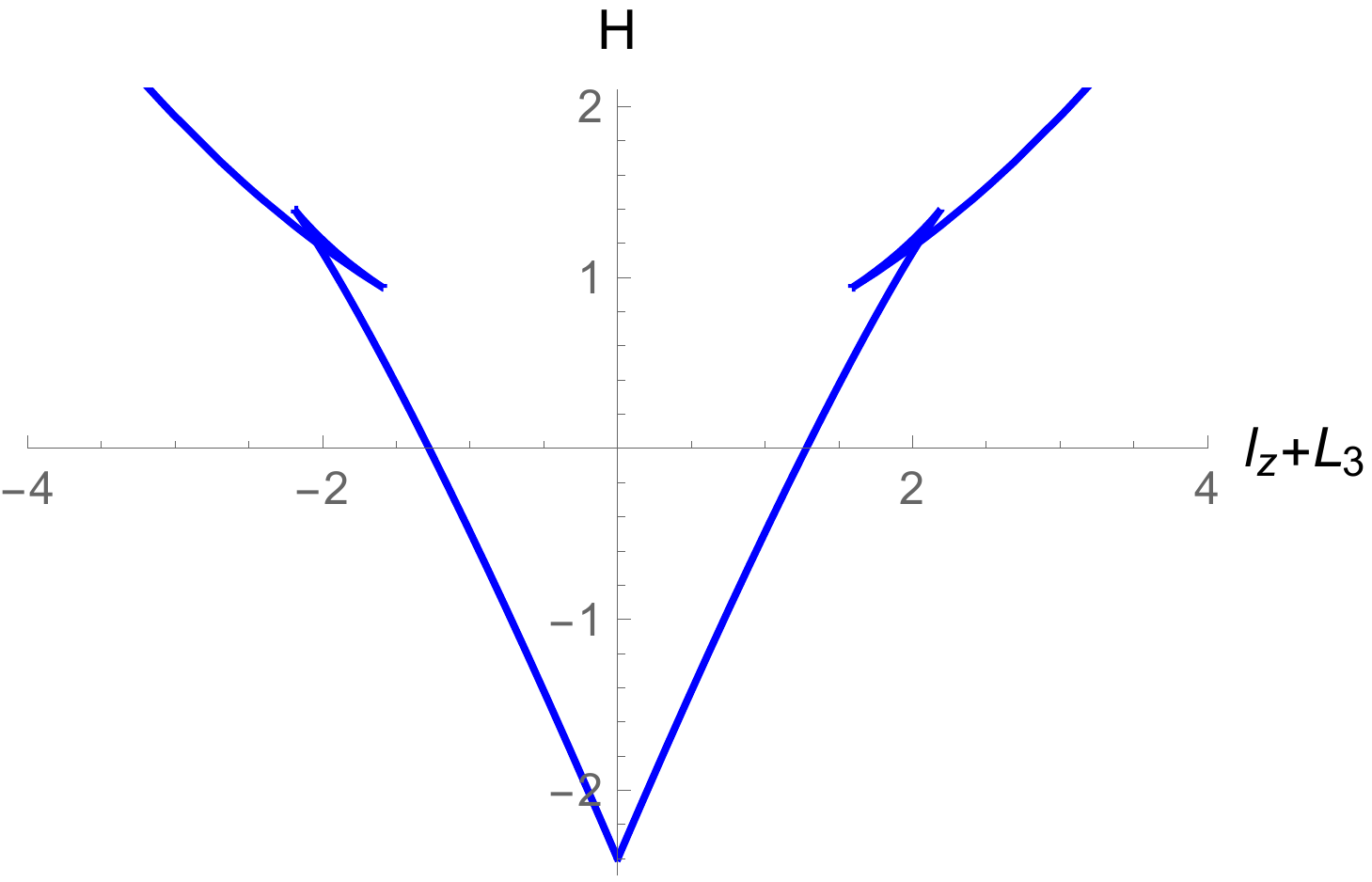}\includegraphics[width=4cm,height=4cm,keepaspectratio]{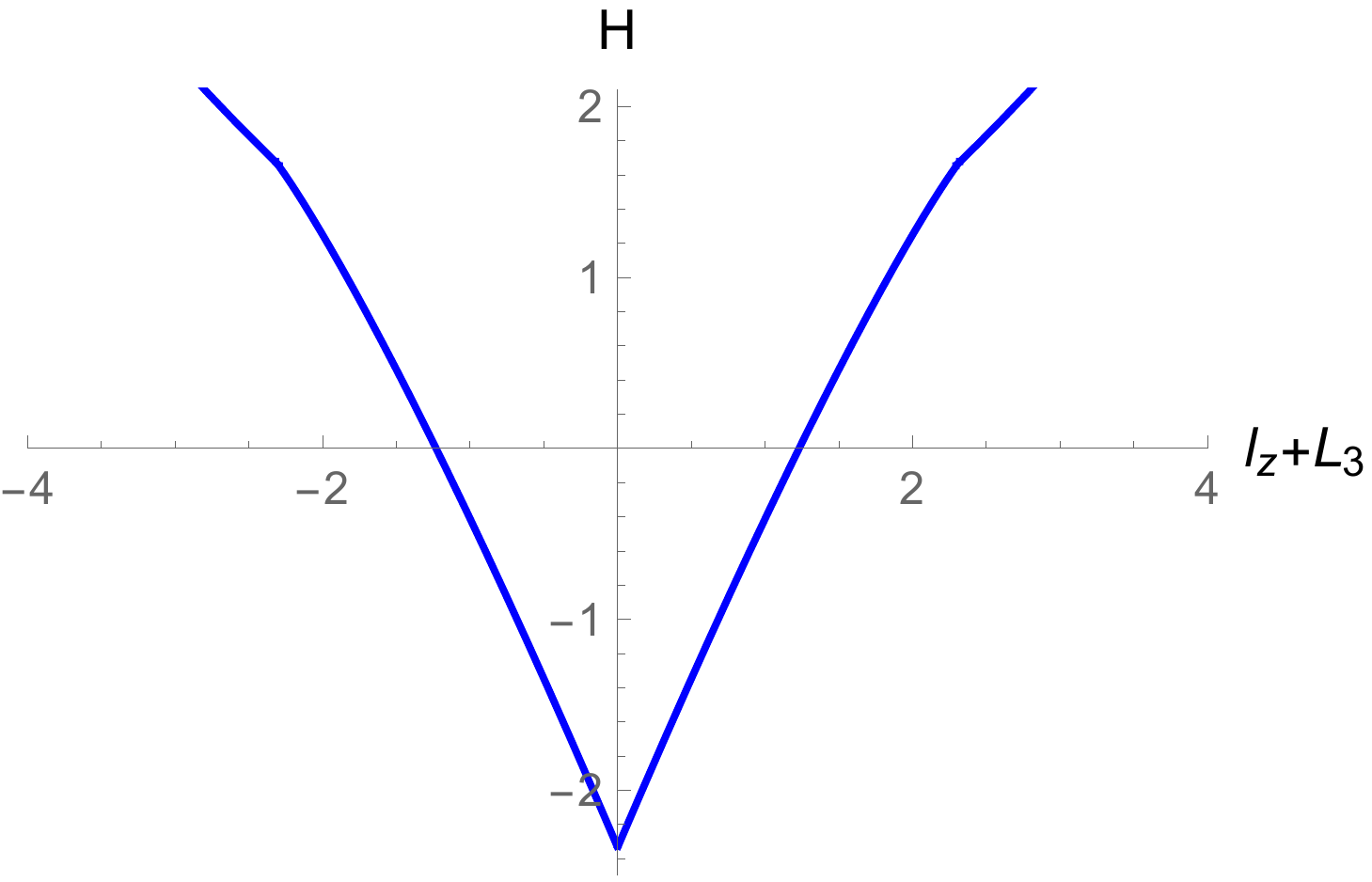}\\
\includegraphics[width=4cm,height=4cm,keepaspectratio]{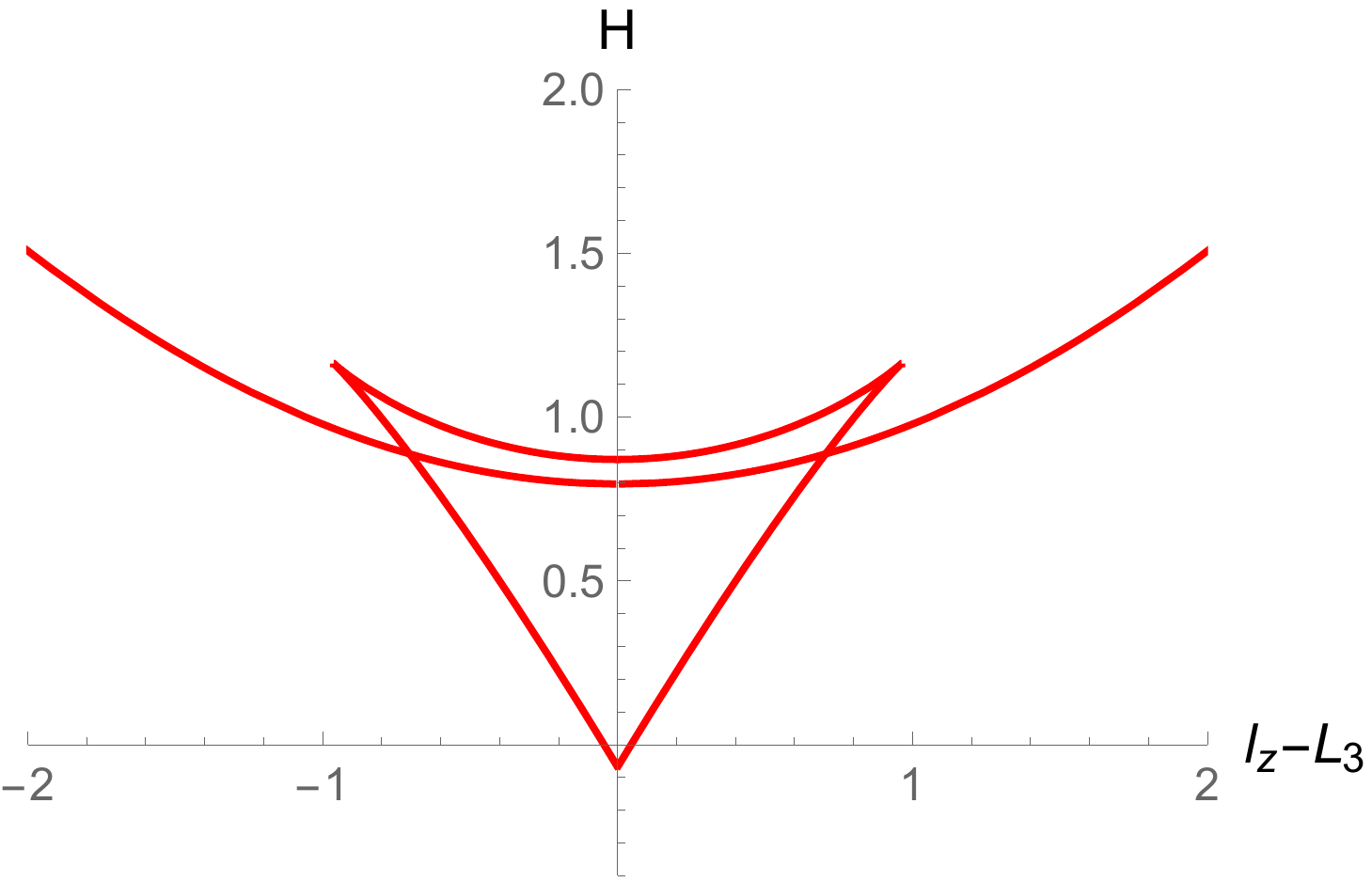}\includegraphics[width=4cm,height=4cm,keepaspectratio]{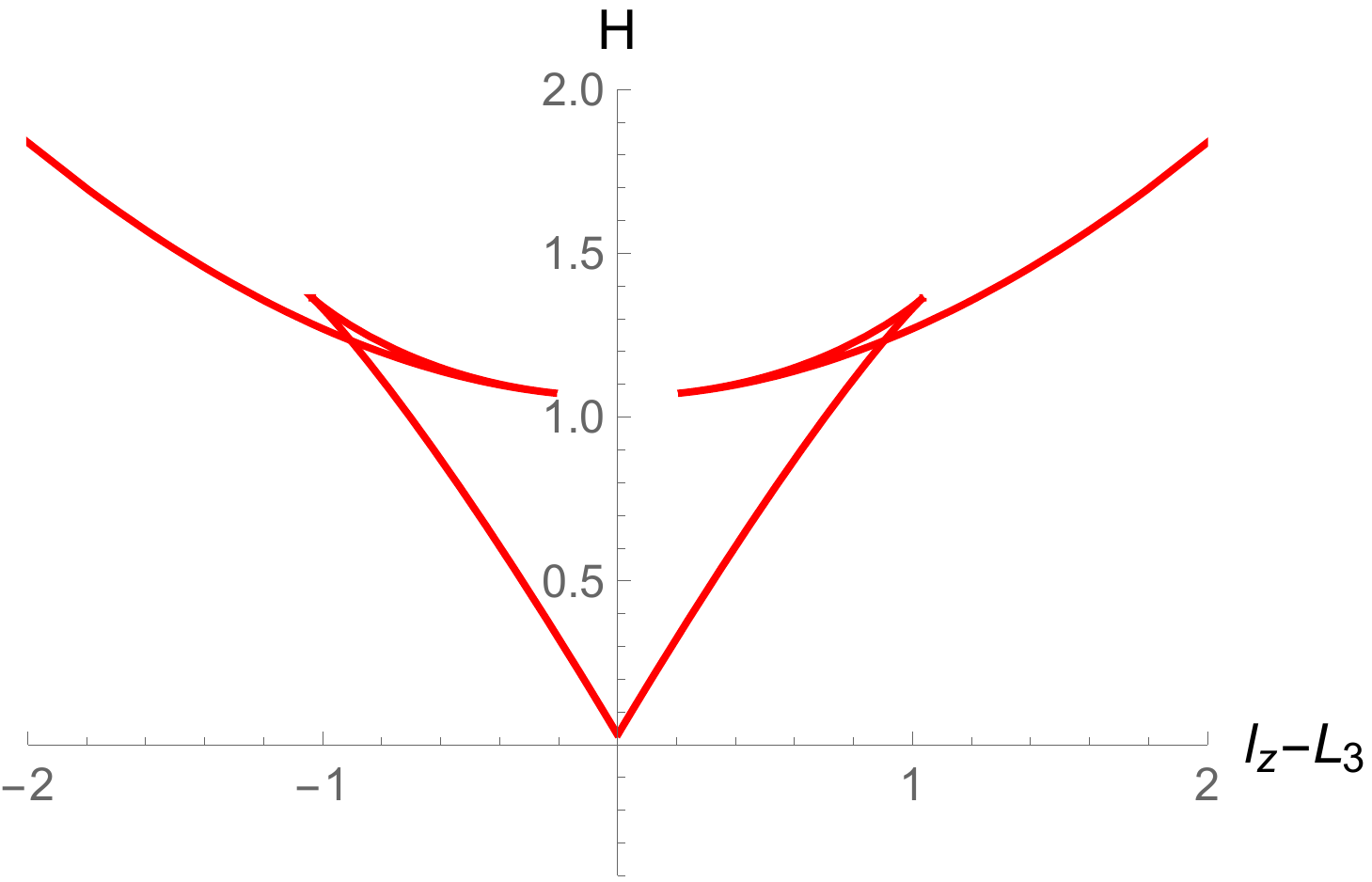}\includegraphics[width=4cm,height=4cm,keepaspectratio]{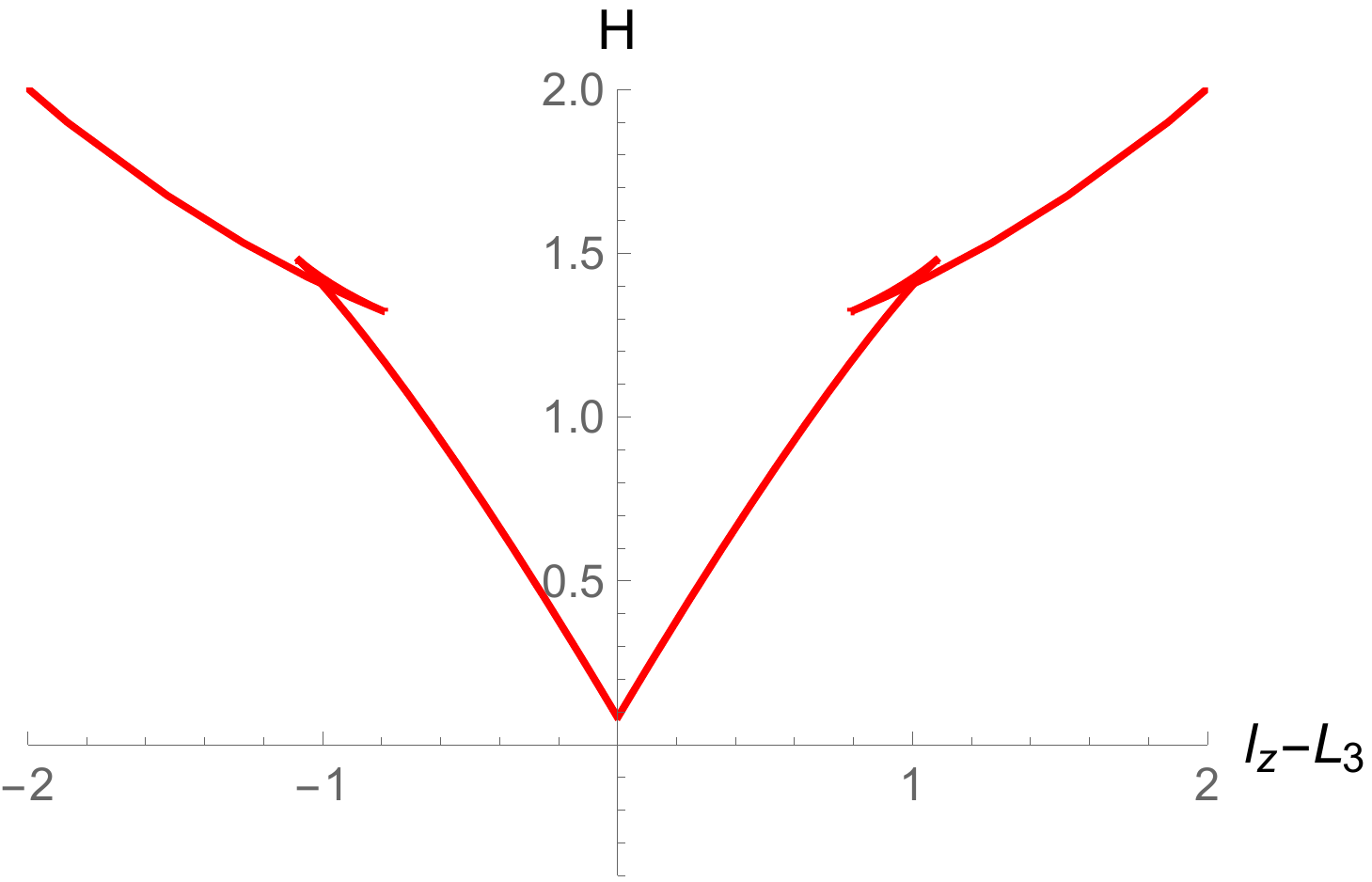}\includegraphics[width=4cm,height=4cm,keepaspectratio]{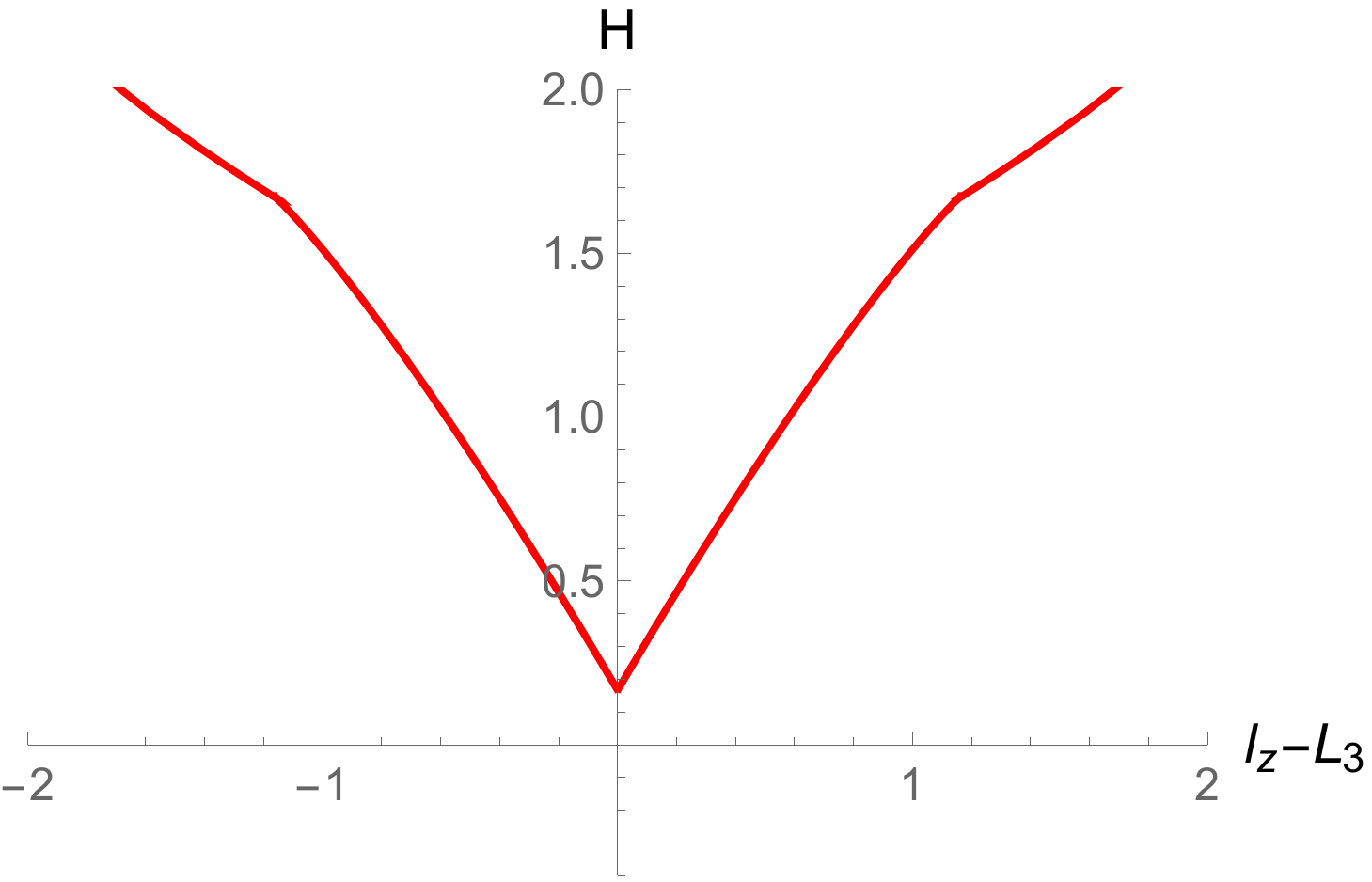}

\caption{Slices of constant $l_{z}-L_{3}$ and $l_{z}+L_{3}$ near the most degenerate values.
Top: slices with constant $l_z - L_3=(0.488,0.755,0.888,1.15).$ 
Bottom: slices with constant $l_z + L_3=(1.96,2.06,2.16,2.31).$ Parameters are the same as those in Fig. 1c. \label{fig:Pinch pic}}

\end{figure}

\rem{
\section{Reduction and Momentum Map of the Harmonic Lagrange Top}
Reduction by the body frame (right $S^{1}$ symmetry) gives $T^{*}S^{2}$
with invariants $(\bm{\mu},\bm{\zeta})$ where $\bm{\zeta}_{i}=g_{i3}$
and $\ell_{3}=\left\langle \bm{\mu},\bm{\zeta}\right\rangle $. This
gives 
\begin{equation}
H_{R}(\bm{\mu},\bm{\zeta})=\frac{\left\langle \bm{\mu},\bm{\mu}\right\rangle }{2I_{1}}-\frac{I_{3}-I_{1}}{I_{1}I_{3}}\frac{\left\langle \bm{\mu},\bm{\zeta}\right\rangle ^{2}}{2}+\chi\zeta_{3}\label{eq:Hr}
\end{equation}
and corresponding Poisson structure 
\begin{equation}
B=\begin{pmatrix}0 & -\hat{\bm{\zeta}}\\
-\hat{\bm{\zeta}} & -\hat{\bm{\mu}}
\end{pmatrix}\label{eq:B_r}
\end{equation}
 where $\hat{\bm{v}}\bm{u}=\bm{v}\times\bm{u}$, i.e. 
\[
\hat{\bm{v}}=\begin{pmatrix}0 & -v_{3} & v_{2}\\
v_{3} & 0 & -v_{1}\\
-v_{2} & v_{1} & 0
\end{pmatrix}.
\]
The Casimirs of (\ref{eq:B_r}) are $\left\Vert \bm{\zeta}\right\Vert ^{2}=2H$
and $\ell_{3}=\left\langle \bm{\zeta}\cdot\bm{\mu}\right\rangle =a$
where $a$ is a free parameter. We now consider the harmonic Lagrange
top where we add in a harmonic term to $c_{2}\zeta_{3}^{2}$ to (\ref{eq:Hr})
which gives 
\[
H=\frac{\left\langle \bm{\mu},\bm{\mu}\right\rangle }{2I_{1}}+c_{0}+c_{1}\zeta_{3}+c_{2}\zeta_{3}^{2}
\]
where $(c_{0},c_{1})=(-\frac{I_{3}-I_{1}}{I_{1}I_{3}}\frac{\left\langle \bm{\mu},\bm{\zeta}\right\rangle ^{2}}{2},\chi)$
and $c_{2}$ is arbitrary. On $T^{*}SO(3)$ our integrable system
is comprised of the triple $(H,\mu_{3},\ell_{3})$ while on $T^{*}S^{2}$
our integrable system is defined by $(H,\mu_{3};B)$. 

On $T^{*}S^{2}$ we have the integrals $(H,\mu_{3})$ with Poisson
structure given by $B$ in (\ref{eq:B_r}). The combined flow
is given by 
\begin{equation}
\begin{pmatrix}\dot{\bm{\zeta}}\\
\dot{\bm{\mu}}
\end{pmatrix}=\alpha B\nabla H+\beta B\nabla\mu_{3}=\begin{pmatrix}\bm{\zeta}\times\left[\frac{\alpha}{I_{1}}\bm{\mu}+\beta\bm{e}_{z}\right]\\
\left[\alpha\bm{\zeta}(c_{1}+2c_{2}\zeta_{3})+\beta\bm{\mu}\right]\times\bm{e}_{z}
\end{pmatrix}.\label{eq:vector field}
\end{equation}
For the vector field in (\ref{eq:vector field}) to vanish, we require
\begin{equation}
\begin{aligned}\frac{\alpha}{I_{1}}\bm{\mu}+\beta\bm{e}_{z}=k_{1}\bm{\zeta}\\
\alpha\bm{\zeta}(c_{1}+2c_{2}\zeta_{3})+\beta\bm{\mu}=k_{2}\bm{e}_{z}
\end{aligned}
\label{eq:vanish cond}
\end{equation}
 where $(k_{1},k_{2})=(\frac{-\alpha^{2}}{\beta I_{1}}(c_{1}+2c_{2}\zeta_{3}),-\frac{\beta^{2}I_{1}}{\alpha})$.
Solving (\ref{eq:vanish cond}) for $\bm{\mu}$ gives 
\begin{equation}
\bm{\mu}=-\bm{\zeta}\frac{\alpha}{\beta}(c_{1}+2c_{2}\zeta_{3})-\frac{\beta}{\alpha}\bm{e}_{z}I_{1}.\label{eq:muzetarel}
\end{equation}
Without loss of generality, we set $\alpha=1$ for simplicity. Substituting
(\ref{eq:muzetarel}) into the Casimir $\left\langle \bm{\mu},\bm{\zeta}\right\rangle =a$
gives the following quadratic equation 
\[
-\frac{1}{\beta}(c_{1}+2c_{2}\zeta_{3})-I_{1}\zeta_{3}\beta=a
\]
which yields two solutions for $\beta$
\begin{equation}
\beta=\frac{-a\pm\sqrt{a^{2}-4I_{1}\zeta_{3}(c_{1}+2c_{2}\zeta_{3})}}{2I_{1}\zeta_{3}}.\label{eq:betacond}
\end{equation}
The condition that the discriminant $\Delta\coloneqq a^{2}-4I_{1}\zeta_{3}(c_{1}+2c_{2}\zeta_{3})$
in (\ref{eq:betacond}) be strictly non negative gives the range of
$\zeta_{3}$ values as 
\begin{equation}
\zeta_{3}=\frac{-c_{1}I_{1}\pm\sqrt{2a^{2}c_{2}I_{1}+c_{1}^{2}I_{1}^{2}}}{4c_{2}I_{1}}.\label{eq:caustics}
\end{equation}
The corresponding critical values are pararmeterised entirely by $\zeta_{3}$
over the interval defined by (\ref{eq:caustics}):
\begin{equation}
(\mu_{3},H)=(-\frac{\zeta_{3}(c_{1}+2c_{2}\zeta_{3})}{\beta}-\beta I_{1},c_{0}+\frac{1}{2}\left(\frac{(c_{1}+2c_{2}\zeta_{3})^{2}}{\beta^{2}I_{1}}+2\zeta_{3}(2c_{1}+3c_{2}\zeta_{3})+\beta^{2}I_{1}\right)).\label{eq:boudnary}
\end{equation}
The two components of the boundary of the momentum map (see Fig (\ref{fig: MM examples}))
are given by (\ref{eq:boudnary}) and so critical points of the form
(\ref{eq:muzetarel}) are codimension one elliptic type. 

Critical points of corank two are found by solving $B\nabla H=B\nabla\mu_{3}=0$
simultaneously. We have 
\[
\begin{aligned}B\nabla H=\begin{pmatrix}\frac{\bm{\zeta}\times\bm{\mu}}{I_{1}}\\
(c_{1}+2c_{2}\zeta_{3})\bm{\zeta}\times\bm{e}_{z}
\end{pmatrix}, & B\nabla\mu_{3}=\end{aligned}
\begin{pmatrix}\bm{\zeta}\times\bm{e}_{z}\\
\bm{\mu}\times\bm{e}_{z}
\end{pmatrix}.
\]
 The only solutions for both flows to vanish and satify the Casimirs
are $\bm{\zeta}=(0,0,\pm1),\bm{\mu}=(0,0,\pm a)$. The corresponding
critical values are 
\begin{equation}
(\mu_{3},H)=(\pm a,c_{0}\pm c_{1}+c_{1}+\frac{a^{2}}{2I_{1}}).\label{eq:crit vals pts}
\end{equation}
Computing the eigenvalues of the Hessian $\partial_{(\bm{\zeta},\bm{\mu})}(\alpha B\nabla H+\beta B\nabla\mu_{3})$
at the the poles of the $\bm{\zeta}$ sphere gives eigenvalues 
\begin{equation}
\lambda=\pm\sqrt{A\pm B\sqrt{a^{2}\mp4c_{1}I_{1}-8c_{2}I_{1}}}\label{eq:eig cond}
\end{equation}
where the $\mp$ corresponds to whether we pick the North or South
pole of the $\bm{\zeta}$ sphere. Thus, (\ref{eq:eig cond}) gives
a necessary and sufficient condition for the critical values in (\ref{eq:crit vals pts})
to be of focus-focus type. Note that the vanilla Lagrange top with
$c_{2}=0$ can only permit one focus-focus point, but the appropiate
choice of parameter $(a,c_{1},I_{1},c_{2})$ can permit two focus-focus
points in the harmonic Lagrange top. 

We also demonstrate this with the three degree of freedom momentum
map $(H,\mu_{3},\ell_{3}=a)$ shown in Fig.~\ref{fig: Thread}.
For a choice of parameters that permits two focus points, we see two
threads inside the interior of the momentum map (red and blue). 

\begin{figure}[H]
\begin{centering}
\includegraphics[width=5cm,height=5cm,keepaspectratio]{twofp}\includegraphics[width=5cm,height=5cm,keepaspectratio]{onefp}\includegraphics[width=5cm,height=5cm,keepaspectratio]{nofp}
\par\end{centering}
\caption{Choices of parameters $(c_{0},c_{1},c_{2},a,I_{1})$ determine the
number of focus-focus points. The different colours of the boundary
of the momentum map signify which choice of $\beta$ is used from
(\ref{eq:betacond}).\label{fig: MM examples}}
\end{figure}

\begin{figure}[H]
\begin{centering}
\includegraphics[width=9cm,height=9cm,keepaspectratio]{betterthread}
\par\end{centering}
\caption{The momentum map for the full three degree of freedom system showing
two focus-focus threads.\label{fig: Thread}}

\end{figure}
} 

\rem{
\section{Full Momentum Map}

The full 3 degree of freedom system is given by the triple $\Omega=(M,K,H)$
where $M=\mu_{3}+\left\langle \bm{\mu},\bm{\zeta}\right\rangle ,K=\mu_{3}-\left\langle \bm{\mu},\bm{\zeta}\right\rangle $
and 
\begin{equation}
H(\bm{\mu},\bm{\zeta})=\frac{\left\langle \bm{\mu},\bm{\mu}\right\rangle }{2I_{1}}+c_{0}\left\langle \bm{\mu},\bm{\zeta}\right\rangle ^{2}+c_{1}\zeta_{3}+c_{2}\zeta_{3}^{2}\label{eq:H 3dof}
\end{equation}
where $c_{0}=-\frac{I_{3}-I_{1}}{2I_{1}I_{3}}$. The Poisson structure
is given by (\ref{eq:B_r}). Since $\nabla M=(\bm{\mu},\bm{e}_{z}+\bm{\zeta}),\nabla K=(-\bm{\mu},\bm{e}_{z}-\bm{\zeta})$
we obtain the combined flow $X$ as 
\[
\begin{aligned}X & =B\left(\alpha\nabla H+\beta\nabla M+\gamma\nabla K\right)\\
 & =\begin{pmatrix}\left(2\alpha c_{0}\bm{\zeta}\cdot\bm{\mu}+\beta-\gamma\right)\bm{\mu}+\alpha\left(c_{1}+2c_{2}\zeta_{3}\right)\bm{e}_{z}\\
\frac{\alpha}{I_{1}}\bm{\mu}+\left(2c_{0}\alpha\bm{\zeta}\cdot\bm{\mu}+\beta-\gamma\right)\bm{\zeta}+\left(\beta+\gamma\right)\bm{e}_{z}
\end{pmatrix}.
\end{aligned}
\]
There are two sets of critical points where the rank of $\partial_{(\bm{\zeta},\bm{\mu})}X$
drops by 2. Firsty, in the plane $M=0$ we have critical points given
by $\bm{\zeta}=(0,0,-1)$, $\bm{\mu}=(0,0,\mu_{3})$ and critical
values $K=-2\mu_{3},H=\frac{1}{2I_{1}}\left(\frac{K}{2}\right)^{2}+c_{0}\left(\frac{K}{2}\right)^{2}+c_{2}-c_{1}$.
Similarly, for $K=0$ we have $\bm{\zeta}=(0,0,1)$, $\bm{\mu}=(0,0,\mu_{3})$
and critical values $M=2\mu_{3},H=\frac{1}{2I_{1}}\left(\frac{M}{2}\right)^{2}+c_{0}\left(\frac{M}{2}\right)^{2}+c_{2}+c_{1}.$ 

Critical values that form the boundary surface of the momentum map
are given by 
\begin{equation}
\begin{aligned}H & =\frac{1}{2I_{1}}\left(\lambda^{2}\left(v^{'}\right)^{2}-2v^{'}\zeta_{3}+\frac{I_1}{\lambda^{2}}\right)+c_{0}\left(\lambda v^{'}+\frac{\zeta_{3}}{\lambda}\right)+v\\
M & =(1+\zeta_{3})(\lambda v^{'}-\frac{I_1}{\lambda})\\
K & =(1-\zeta_{3})(-\lambda v^{'}-\frac{I_1}{\lambda})
\end{aligned}
\label{eq:crit surface para}
\end{equation}
 where $\lambda=\frac{-\alpha}{\beta+\gamma}\in\mathbb{R}\backslash\{0\}$,
$\zeta_{3}\in[-1,1]$ and $v(\zeta_{3})=c_{1}\zeta_{3}+c_{2}\zeta_{3}^{2}$.
From \pageref{eq:crit surface para} we obtain $\bm{\mu}=\lambda v^{'}\bm{\zeta}+\frac{1}{\lambda}\bm{e}_{z}$.
Along the plane $M=0$ and $\zeta_{3}\ne-1$ we have $v^{'}=-\frac{1}{\lambda^{2}}$
and so 
\begin{equation}
\begin{aligned}H & =\frac{1}{2I_{1}}\left(-1+2v^{'}\zeta_{3}-v^{'}\right)-\frac{c_{0}}{\lambda}(1-\zeta_{3})+v\\
K & =(1-\zeta_{3})\frac{1}{2\lambda}=2\bm{\zeta}\cdot\bm{\mu}.
\end{aligned}
\label{eq:Mo triangle}
\end{equation}
Similarly, along $K=0$ with $\zeta_{3}\ne1$ we have $v^{'}=\frac{1}{\lambda^{2}}$
and 
\begin{equation}
\begin{aligned}H & =\frac{1}{2I_{1}}\left(1+2v^{'}\zeta_{3}-v^{'}\right)+\frac{c_{0}}{\lambda}(1+\zeta_{3})+v\\
M & =(1+\zeta_{3})\frac{1}{2\lambda}.
\end{aligned}
\label{eq:Ko triangle}
\end{equation}
In Fig.~\ref{fig:3dofmaps} we show examples of the 3 degree of
freedom map. 

\begin{figure}[H]
\begin{centering}
\includegraphics[width=8cm,height=8cm,keepaspectratio]{\string"Generic Case\string".pdf}\includegraphics[width=8cm,height=8cm,keepaspectratio]{LimitCase}
\par\end{centering}
\caption{Example momentum map with $(I_{1},I_{3},c_{1},c_{2})=(1.4,2.12,-1.9,-3.8)$
showing the parabolic arcs along $M=0$ (blue) and $K=0$ (red). The
blue and red triangles are parameterised by (\ref{eq:Mo triangle})
and (\ref{eq:Ko triangle}). In the limit $c_{1}=2c_{2}$ (with $(c_{1},c_{2})=(-3.8,-1.9)$
the blue triangle joins with the parabolic arc while the red triangle
shrinks to a point and a focus-focus point/thread appears.\label{fig:3dofmaps}}
\end{figure}
} 

\rem{
\begin{figure}[H]
\begin{centering}
\includegraphics[width=8cm,height=8cm,keepaspectratio]{limit2d}\includegraphics[width=8cm,height=8cm,keepaspectratio]{\string"limit2d red\string".pdf}
\par\end{centering}
\caption{a) An oblate momentum map appears in the cross section of the limiting
case when $M=0$. b) Similarly, a prolate section appears in the $K=0$
section. WHAT ARE the PARAMETERS? \label{fig: slicespec}}
\end{figure}
}

\section{Quantum Mechanics of the Harmonic Lagrange Top}

The quantisation of the rigid body is textbook material, see, e.g., \cite[\textsection{103}]{landau13quantum}. The global action variables $l_z$ and $L_3$ become operators $\hat l_z = -i \partial / \partial \phi$ and $\hat L_3 = -i  \partial / \partial \psi$, measured in units of $\hbar$. We denote the corresponding integer eigenvalues by $m$ and $k$ such that $\hat l_z \Psi = m \Psi$ and $\hat L_3 \Psi = k \Psi$ for a wave function $\Psi$.

The quantum mechanical harmonic Lagrange top has the Hamiltonian
operator
\begin{equation}
\hat{H}=
\frac{1}{2 I_1}\left(\hat{\bm{l}}^{2}+\delta \hat{L}_3^{2}\right)  
+c_{1}\cos\theta+c_{2}\cos^{2}\theta\label{eq:hlt}
\end{equation}
where $\hat{\bm{l}}$ is the total angular momentum operator. 
Explicitly the first part of the Hamiltonian operator is found as the Laplace-Beltrami operator of the metric $T_{round}$ of the spherical top, hence
\[
\hat{ \bm{l}}^2 = -\frac{1}{\sin\theta} \partial_\theta( \sin\theta \, \partial_\theta) + \frac{1}{\sin^2\theta}( m^2 + k^2 -2 mk \cos\theta ) \,,
\]
where we have already replaced the operator $\hat{l}_z$ and $\hat{L}_3$ by their respective eigenvalues.
\rem{
The Laplace-Beltrami operator for the metric $g_{ij}$ is
\[
   \Delta f = \frac{1}{\sqrt{|g|}} \sum_{i,j} \partial_i \left( \sqrt{|g|} g^{ij} \partial_j f\right) 
\]
where $g^{ij}$ denotes the inverse of the metric and $|g|$ its determinant. } 
The equation $\hat{\bm{l}}^2 f = j(j+1) f$ is a self-adjoint form of the hypergeometric equation. 
Setting the eigenvalue of $\hat{\bm{l}}^2$ to $j(j+1)$ for positive integer $j$, solutions are given by $\sin^{|m+k|}\tfrac{\theta}{2} \cos^{|m-k|}\tfrac{\theta}{2} P_{j-\max(|k|,|m|)}^{|m+k|,|m-k|}(\cos\theta)$ where $P_n^{m_1,m_2}$ are the Jacobi polynomials. Up to normalisation and phase factors these are the Wigner-$D$ functions.
The equation has regular singular points at $\theta = 0, \pi$ with indices $\pm(m-k)$ and $\pm(m+k)$, respectively. 
Note that the global quantum numbers $m$ and $k$ appear as indices of regular singular points.

Adding the potential terms, and transforming to $z = \arccos\theta$ brings us to the following observation.
\begin{theorem}
The quantisation of the harmonic Lagrange top leads to the most general confluent Heun equation (aka generalised spheroidal wave equation) which has the self-adjoint form
\begin{equation} \label{eqn:cHeun}
    -\partial_z( (1 - z^2) \partial_z + \frac{k^2 + m^2 - 2 k m z}{1 - z^2} + \tilde c_1 z + \tilde c_2 z^2 - \lambda = 0
\end{equation}
where $z = \cos\theta$ and $\lambda$ is the spectral parameter related to the energy eigenvalue $E$ of the Hamiltonian by $\lambda = 2 I_1 E / \hbar^2 - \delta k^2$, 
$\tilde c_1 = c_1 2 I_1/\hbar^2$, 
$\tilde c_2 = c_2 2 I_1/\hbar^2$, 
\end{theorem}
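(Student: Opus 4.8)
The plan is to reduce the stationary Schr\"odinger equation $\hat H \Psi = E\Psi$, with $\hat H$ as in \eqref{eq:hlt}, to a single ordinary differential equation in the variable $z=\cos\theta$ and then to recognise the result as the canonical confluent Heun equation. First I would replace $\hat{\bm l}^2$ by the explicit Laplace--Beltrami expression given just above the statement, in which $\hat l_z$ and $\hat L_3$ have already been replaced by their integer eigenvalues $m$ and $k$. Since $\hat L_3^2$ now acts as the scalar $k^2$, the term $\tfrac{\delta}{2I_1}\hat L_3^2$ is a constant and can be absorbed into the energy; this is exactly the shift recorded in $\lambda = 2I_1 E/\hbar^2 - \delta k^2$. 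Multiplying the eigenvalue equation by $2I_1/\hbar^2$ then turns the potential $c_1\cos\theta + c_2\cos^2\theta$ into $\tilde c_1\cos\theta + \tilde c_2\cos^2\theta$ with $\tilde c_i = c_i\,2I_1/\hbar^2$, and leaves a $\theta$-equation whose only ingredients are the spherical-top operator, the centrifugal term, and the spectral constant $\lambda$.

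Second, I would carry out the change of variable $z=\cos\theta$. From $\partial_\theta = -\sin\theta\,\partial_z = -\sqrt{1-z^2}\,\partial_z$ one gets $\sin\theta\,\partial_\theta = -(1-z^2)\partial_z$, so the radial part of the Laplacian transforms as $-\tfrac{1}{\sin\theta}\partial_\theta(\sin\theta\,\partial_\theta) = -\partial_z\!\big((1-z^2)\partial_z\big)$, the centrifugal term becomes $\tfrac{m^2+k^2-2mkz}{1-z^2}$, and the potential becomes $\tilde c_1 z + \tilde c_2 z^2$. Collecting these reproduces \eqref{eqn:cHeun} verbatim; this part is a routine computation with no hidden difficulty.

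Third, and this is where the real content lies, I would identify \eqref{eqn:cHeun} with the confluent Heun equation and verify that it is the \emph{most general} one. Dividing \eqref{eqn:cHeun} through by $-(1-z^2)$ puts it in the non-self-adjoint form $f'' - \tfrac{2z}{1-z^2}f' - \tfrac{1}{1-z^2}[\,\cdots]f = 0$, which exhibits regular singular points at $z=\pm 1$ whose indicial equations give exponent differences $|m-k|$ and $|m+k|$ (consistent with the indices recalled before the statement). As $z\to\infty$ the coefficient of $f$ tends to the nonzero constant $\tilde c_2$, so infinity is an irregular singular point of rank one whenever $\tilde c_2\neq 0$: the equation is of confluent Heun type, equivalently the generalised spheroidal wave equation. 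To establish maximal generality I would pass to a M\"obius normalisation sending $z=\pm 1$ to $0,1$ and count essential parameters: the two exponent differences coming from $m\pm k$, the leading irregularity parameter supplied by $\tilde c_2$, the subleading one supplied by $\tilde c_1$, and the accessory parameter $\lambda$. These five quantities are independent functions of the physical data $(m,k,c_1,c_2,E)$, so they fill out the full parameter space of the confluent Heun function.

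The main obstacle is therefore not the differential-equation manipulation but the parameter bookkeeping in the last step: one must match the quintuple $(m+k,\,m-k,\,\tilde c_1,\,\tilde c_2,\,\lambda)$ against a standard normal form of the confluent Heun equation and confirm that no relation is forced among the five. This is precisely the point that distinguishes the harmonic Lagrange top from the black-hole perturbation problem mentioned in the introduction, where the analogous parameters are constrained and only a subfamily of confluent Heun equations is realised. I would close by writing out the explicit dictionary between $(m,k,\tilde c_1,\tilde c_2,\lambda)$ and one of the conventional parameter sets, for instance that of \cite{DLMF}, thereby certifying the phrase ``most general''.
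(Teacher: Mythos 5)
Your proposal is correct and follows essentially the same route as the paper: quantise using the Laplace--Beltrami form of $\hat{\bm{l}}^2$ with $\hat l_z,\hat L_3$ replaced by their eigenvalues $m,k$, absorb the $\delta k^2$ constant into $\lambda$, rescale by $2I_1/\hbar^2$, substitute $z=\cos\theta$, and then establish ``most general'' by identifying the regular singular points at $z=\pm1$ with exponent differences $|m\pm k|$, the rank-one irregular point at infinity, and counting the five free parameters $(m+k,\,m-k,\,\tilde c_1,\,\tilde c_2,\,\lambda)$ against the normal form. The paper's own treatment is exactly this (including the final dictionary, via the shift $z\to(z+1)/2$ and the rescaling $\exp(2\sqrt{\tilde c_2})z^{|m+k|}(z-1)^{|m-k|}$ to reach the standard form of \cite{DLMF}), so there is nothing to add.
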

In the form \eqref{eqn:cHeun} the indices at $z = \pm 1$ are $\pm(m-k)/2$ and $\pm(m+k)/2$. 
This equation has an irregular singular point at infinity, which is obtained by the confluence of two regular singular points of the Heun equation. The Heun equation is the most general Fuchsian equation with 4 regular singular points. The Heun equation (after normalisation by M\"obius transformations) has 6 parameters, 1 position of a pole, 4 indices, and the so called accessory parameter. The pole position is used for the confluence, after which only two regular singular points remain. Hence 2 indices remain as parameters (related to $k$ and $m$). Two additional parameters describe the behaviour near the irregular singular point, and the accessory parameter remains, so that there is a total of 5 parameters. 

To transform into the standard form of the confluent Heun equation, see, e.g., \cite{DLMF}, first shift to the standard poles by $ z \to ( z+1)/2$, and then scale the dependent variable with $\exp(2\sqrt{\tilde c_2})z^{|m+k|}(z-1)^{|m-k|}$.

When considering the confluent Heun equation, the usual reference to its application in physics is to Teukolsky's master equation \cite{teukolsky1973perturbations}, which appears in the perturbation theory around a rotating  black hole, i.e.~the Kerr metric. However, that equation only has 4 parameters, and one index-parameter is more restricted because it represents the spin of a particle. In this context, eigenvalues $\lambda$ of the equation have been computed using expansion in Jacobi polynomials in \cite{Crossman77}. Their results are not applicable to our case because their equation only has 4 parameters. To compute the spectrum in our case we generalise the papers 
\cite{Shirley63stark}, \cite{Hajnal91stark} which treat the case of a symmetric molecule (i.e.~top) in an electric field, hence the Lagrange top (without the harmonic field). To extend their method, which is also an expansion in Jacobi polynomials (or rather the related Wigner $D$-functions), we need to compute the matrix elements of $\cos^2\theta$. This leads to our final result.

\begin{theorem}
The spectrum of the harmonic Lagrange top \eqref{eq:hlt} which is equivalent to the most general confluent Heun equation  \eqref{eqn:cHeun} can be computed from a penta-diagonal symmetric matrix
\begin{equation}
\hat H = \hat H_0 + c_1 \hat H_1 + c_2 \hat H_1^2\,.
\end{equation}
For given fixed $m,k$ the operator $\hat H_0$ is the diagonal representation of the Hamiltonian without potential and $H_1$ is the tri-diagonal representation of $\cos\theta$ in terms of Wigner-$D$ basis functions.
\rem{
\begin{equation}
\hat{H}=\begin{pmatrix}C_{0} & B_{1} & A_{2}\\
B_{1} & C_{1} & B_{2} & A_{3}\\
A_{2} & B_{2} & C_{2} & B_{3} & A_{4}\\
 & A_{3} & B_{3} & C_{3} & B_{4} & A_{5}\\
 &  &  &  &  &  & \ddots
\end{pmatrix}\label{eq:Hmat}
\end{equation}
which is the representation of the Hamiltonian operator of the harmonic Lagrange top in the basis of Wigner-$D$ functions. The entries of the matrix are given by
\[
C_i = ..., \quad
B_i = ..., \quad
A_i = ...
\]
\footnote{the notation $C,B,A$ seems unusual, wouldn't $A,B,C$ be more natural?}
where $C_{a_{1}b_{1}a_{2}b_{2}}^{AB}$ are the Clebsch-Gordon coefficients.
} 
\end{theorem}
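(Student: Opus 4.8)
The plan is to diagonalise the kinetic part exactly and then realise the two potential terms as powers of a single tridiagonal matrix. For fixed integers $m,k$ the self-adjoint operator $\hat{\bm{l}}^2$ acting on $L^2([0,\pi],\sin\theta\,d\theta)$ has a complete orthonormal system of eigenfunctions $\psi_j$, $j=j_0,j_0+1,\dots$ with $j_0=\max(|m|,|k|)$, namely the normalised Wigner-$D$ functions $\psi_j\propto \sin^{|m+k|}\tfrac{\theta}{2}\cos^{|m-k|}\tfrac{\theta}{2}\,P_{j-j_0}^{|m+k|,|m-k|}(\cos\theta)$ recorded before the statement, satisfying $\hat{\bm{l}}^2\psi_j=j(j+1)\psi_j$. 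In this basis the potential-free operator $\hat H_0=\frac{1}{2I_1}(\hat{\bm{l}}^2+\delta k^2)$ is by construction diagonal, with entries $\frac{1}{2I_1}(j(j+1)+\delta k^2)$; this supplies the first summand.

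Next I would compute the matrix $\hat H_1$ of the multiplication operator $\cos\theta$ in this orthonormal basis. Since the weight $\sin^{2|m+k|}\tfrac{\theta}{2}\cos^{2|m-k|}\tfrac{\theta}{2}$ together with $\sin\theta\,d\theta$ is, in the variable $z=\cos\theta$, exactly the Jacobi weight $(1-z)^{|m+k|}(1+z)^{|m-k|}$, the overlap $\langle \psi_{j'},\cos\theta\,\psi_j\rangle$ reduces to the integral of $z$ against a product of two Jacobi polynomials with respect to their own orthogonality weight. The classical three-term recurrence $zP_n^{(\alpha,\beta)}=a_nP_{n+1}^{(\alpha,\beta)}+b_nP_n^{(\alpha,\beta)}+c_nP_{n-1}^{(\alpha,\beta)}$ then forces this overlap to vanish unless $j'\in\{j-1,j,j+1\}$, so $\hat H_1$ is tridiagonal; the band coefficients are read off from the recurrence, the diagonal entry being proportional to $\beta^2-\alpha^2=-4mk$ and hence nonzero precisely when $mk\neq 0$. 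Equivalently one obtains tridiagonality abstractly: $\cos\theta$ is, up to a constant, the zonal function $D^1_{00}$, so the Wigner--Eckart theorem couples $j$ only to $j,j\pm1$, and the nonzero entries are the stated products of Clebsch--Gordan coefficients.

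The decisive step, and the point of the \emph{new method}, is to represent $\cos^2\theta$ not by computing its matrix elements directly but as the square $\hat H_1^2$. This is legitimate because, in an orthonormal basis, assigning to a bounded multiplication operator its matrix is an algebra homomorphism on the subalgebra generated by $\cos\theta$: if $\hat H_1$ represents $\cos\theta$ then $\hat H_1^2$ represents $\cos^2\theta$. Squaring a tridiagonal matrix produces a pentadiagonal one, so $\hat H=\hat H_0+c_1\hat H_1+c_2\hat H_1^2$ is pentadiagonal. Symmetry is immediate: multiplication by the real function $\cos\theta$ satisfies $\langle f,\cos\theta\,g\rangle=\langle \cos\theta\,f,g\rangle$, so $\hat H_1$ is real symmetric, hence so are $\hat H_1^2$ and the diagonal $\hat H_0$, and therefore their real linear combination. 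Matching $\tilde c_1,\tilde c_2$ and $\lambda$ to \eqref{eqn:cHeun} then identifies the eigenvalue problem for this matrix with the spectral problem for the confluent Heun equation.

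I expect the main obstacle to be bookkeeping rather than conceptual: carrying the normalisation constants of the Wigner-$D$ functions through the Jacobi recurrence so that the resulting tridiagonal matrix comes out \emph{manifestly} symmetric, since the monic recurrence coefficients $a_n,b_n,c_n$ are not symmetric and one must rescale to the orthonormal basis so that sub- and super-diagonals coincide. A secondary point to address is that the statement concerns an infinite matrix, so I would remark that $\cos\theta$ and $\cos^2\theta$ are bounded while $\hat H_0$ is essentially self-adjoint on the span of the $\psi_j$; this guarantees that the pentadiagonal matrix defines a self-adjoint operator whose spectrum is the desired one and is recovered in the limit from finite truncations.
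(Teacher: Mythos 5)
Your proposal is correct and follows essentially the same route as the paper: work in the Wigner-$D$ basis where $\hat H_0$ is diagonal, note that $\cos\theta$ is tridiagonal there (the paper simply cites Shirley's Stark-effect formulas for the explicit entries $a_j=-km/(j(j+1))$ and $b_j$ rather than rederiving them from the Jacobi recurrence), and then use the same decisive step of representing $\cos^2\theta$ by the matrix square $\hat H_1^2$ instead of computing its elements via Clebsch--Gordan coefficients for $D_{2,0,0}$. Your remarks on completeness of the basis and boundedness of the multiplication operator are precisely what legitimises that squaring step, which the paper leaves implicit.
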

\begin{proof}
The formulas for $\hat H_0$ and $\hat H_1$ are given in \cite{Shirley63stark}. We repeat them here for convenience.
The diagonal entries of $\hat H_0$ are $\frac{\hbar^2}{2I_1}( j(j+1) + \delta k^2)$. The diagonal entries of $\hat H_1$ are
$a_j = -k m / ( j ( j+1))$ and the off-diagonal entries are 
$b_j = -\sqrt{ ( j^2 - k^2)(j^2 - m^2)/(j^2 (4 j^2 - 1)) }$.
The first entries in the matrix representing the operators have  $j = \max(m,k)$. Note that for $m=k=0$ the diverging terms in $b_j$ cancel and $b_0$ is defined.
It is easy to compute the matrix elements of $\cos^2\theta$. This can be done by noticing that  $D_{2,0,0} = \tfrac32 \cos^2\theta - \tfrac12$. The matrix representation of $D_{2,0,0}$ can be expressed in terms of Clebsch-Gordan coefficients. However, it is more efficient to use the fact that since $\hat H_1$ represents $\cos\theta$ the matrix $\hat H_1^2$ represents $\cos^2\theta$. 
So instead of computing matrix elements of $\cos^2\theta$ from scratch in terms of Clebsch-Gordan coefficients we can simply compute the square of the matrix representation of $\hat H_1$. In particular the entries in the 2nd off-diagonal are  given by products $b_{j-1} b_j$.
\end{proof}

The numerical convergence of these expressions is good, and the spectra displayed in Figure~2 were computing from these matrices truncated at twice the maximal needed quantum number $j$.
Even though the term $\delta L_3^2$ in the Hamiltonian is important for the classical dynamics,
its effect on the quantum spectrum is rather trivial, it simply adds $\delta k^2$. It does change the spectrum, but the change is simple, and for this reason in the figures we restricted attention to $\delta = 0$, the spherical top. Moreover, from the point of view of the computation of the spectrum of the general confluent Heun equation the term $\delta k^2$ is irrelevant. 

Why is there a correspondence between the harmonic Lagrange top and the confluent Heun equation? This question may not have a definite answer, but it is suggestive that the harmonic potential is the most general potential for which the classical dynamics can be linearised using the Jacobian of an elliptic curve. This fact appears to be related to the fact that the corresponding quantum system is described by the confluent Heun equation. After adding higher order terms to the potential, the system remains integrable and separable in the same way, but the classical dynamics will involve hyperelliptic curves, and the quantum system will be described by higher order confluent Fuchsian equations. It would be interesting to make this observation more precise.

\rem{
\[
\begin{aligned}A_{i} & =\frac{2}{3}c_{2}\sqrt{\frac{2J+1}{2J+5}}C_{J,M,J+2,M}^{(2,0)}C_{J,K,J+2,K}^{(2,0)}\\
B_{i} & =\frac{2}{3}c_{2}\sqrt{\frac{2J+1}{2J+3}}C_{J,M,J+1,M}^{(2,0)}C_{J,K,J+1,K}^{(2,0)}+c_{1}\sqrt{\frac{2J+1}{2J+3}}C_{J,M,J+1,M}^{(1,0)}C_{J,K,J+1,K}^{(1,0)}\\
C_{i} & =\frac{2}{3}c_{2}\sqrt{\frac{2J+1}{2J+5}}C_{J,M,J,M}^{(2,0)}C_{J,K,J,K}^{(2,0)}+c_{1}\sqrt{\frac{2J+1}{2J+5}}C_{J,M,J,M}^{(1,0)}C_{J,K,J,K}^{(1,0)}+H_{0}
\end{aligned}
\]
where $H_{0}=\frac{\hbar^{2}}{2}(\frac{J(J+1)}{I_{1}}+K^{2}(\frac{1}{I_{3}}-\frac{1}{I_{1}}))$.
.......

The three quantum numbers $J,M,K$ are the eigenvalues of the
operators $\bm{J},J_{z},\bm{K}$ respectively. We expand the solutions
\[
\Phi_{JMK}(\alpha,\beta,\gamma)=D_{MK}^{J}(\alpha,\beta,\gamma)\sqrt{\frac{2J+1}{8\pi^{2}}}
\]
where $D_{MK}^{J}$ are the Wigner D matrix elements. Consider the
basis denoted by $\left|J,M,K\right\rangle $ and the eigenvalue equation
$\hat{H}\Psi=\lambda\Psi$. We expand the solution in the $\Phi_{JMK}$
basis, i.e. 
\begin{equation}
\Psi_{JMK}(\alpha,\beta,\gamma)=\sum_{J^{'}=J_{m}}^{\infty}Q_{J^{'}}\Phi_{JMK}(\alpha,\beta,\gamma)\label{eq:basis exp}
\end{equation}
 where $J_{m}=\max(\left|M\right|,\left|K\right|)$. Projecting onto
this basis, we find 
\begin{equation}
\left\langle J,M,K\right|\hat{H}_{0}\left|J,M,K\right\rangle =\frac{\hbar^{2}}{2}\left(\frac{J(J+1)}{I_{a}}+K^{2}\left(\frac{1}{I_{c}}-\frac{1}{I_{a}}\right)\right)\label{eq:H0}
\end{equation}
 and since $D_{00}^{1}=\cos(\beta)$ we have 
\begin{equation}
\begin{aligned}\left\langle J+1,M,K\right|\cos(\beta)\left|J,M,K\right\rangle  & =\sqrt{\frac{2J+1}{2J+3}}C_{J,M,J+1,M}^{1,0}C_{J,K,J+1,K}^{1,0}\\
\left\langle J,M,K\right|\cos(\beta)\left|J,M,K\right\rangle  & =\frac{KM}{J(J+1)}\\
\left\langle J-1,M,K\right|\cos(\beta)\left|J,M,K\right\rangle  & =\sqrt{\frac{2J+1}{2J-3}}C_{J,M,J-1,M}^{1,0}C_{J,K,J-1,K}^{1,0}
\end{aligned}
\label{eq:cos beta}
\end{equation}
where $C_{j_{1}m_{1}j_{2}m_{2}}^{J,M}=\left\langle j_{1}m_{1}j_{2}m_{2}\right|\left.JM\right\rangle $
denotes the Clebsch Gordon coefficients. Note that $\left\langle J+1,M,K\right|\cos(\beta)\left|J,M,K\right\rangle =\left\langle J,M,K\right|\cos(\beta)\left|J+1,M,K\right\rangle .$
Similarly, since $D_{00}^{2}=P_{2}(\cos(\beta))=\frac{1}{2}(3\cos^{2}(\beta)-1)$
we have 
\begin{equation}
\begin{aligned}\left\langle J+2,M,K\right|D_{00}^{2}\left|J,M,K\right\rangle  & =\sqrt{\frac{2J+1}{2J+5}}C_{J,M,J+2,M}^{2,0}C_{J,K,J+2,K}^{2,0}\\
\left\langle J+1,M,K\right|D_{00}^{2}\left|J,M,K\right\rangle  & =\sqrt{\frac{2J+1}{2J+3}}C_{J,M,J+1,M}^{2,0}C_{J,K,J+1,K}^{2,0}\\
\left\langle J,M,K\right|D_{00}^{2}\left|J,M,K\right\rangle  & =C_{J,M,J,M}^{2,0}C_{J,K,J,K}^{2,0}\\
\left\langle J,M,K\right|D_{00}^{2}\left|J+1,M,K\right\rangle  & =\sqrt{\frac{2J+1}{2J-3}}C_{J,M,J+1,M}^{2,0}C_{J,K,J+1,K}^{2,0}\\
\left\langle J,M,K\right|D_{00}^{2}\left|J+2,M,K\right\rangle  & =\sqrt{\frac{2J+1}{2J-5}}C_{J,M,J-2,M}^{2,0}C_{J,K,J-2,K}^{2,0}.
\end{aligned}
\label{eq:quad}
\end{equation}
Combining (\ref{eq:H0}),(\ref{eq:cos beta}) and (\ref{eq:quad})
gives the 5 term recurrence relation
\[
A_{J^{'}-2}+B_{J^{'}-1}+C_{J^{'}}+B_{J^{'}+1}+A_{J^{'}+2}=0
\]
where 
\[
\begin{aligned}A_{J^{'}+2} & =\\
\\
\\
\end{aligned}
\]
Rewriting the recurrence in matrix form gives the symmetric pentadiagonal
matrix 

\begin{equation}
\hat{H}=\begin{pmatrix}C_{0} & B_{1} & A_{2}\\
B_{1} & C_{1} & B_{2} & A_{3}\\
A_{2} & B_{2} & C_{2} & B_{3} & A_{4}\\
 & A_{3} & B_{3} & C_{3} & B_{4} & A_{5}\\
 &  &  &  &  &  & \ddots
\end{pmatrix}\label{eq:Hmat}
\end{equation}
whose eigenvalues are are the eigenvalues $h$ of $\hat{H}$. An example
of the spectrum is shown in Fig.~\ref{fig: slicespec}.

The coefficients $Q_{J^{'}}$ in (\ref{eq:basis exp}) are given by
the eigenvectors of (\ref{eq:Hmat}). Some example eigenfunctions
are shown in Figure X.
} 

\rem{
\section{Map to $T^{*}S^{2}$}

Let $\bm{x},\bm{y}$ be Cartesian coordinates for the Neumann system
on $T^{*}S^{2}$. Define the transformation
\begin{equation}
\begin{aligned}\psi:e^{*}(3) & \to T^{*}S^{2}\\
(\bm{\zeta},\bm{\mu}) & \to(\bm{x},\bm{y})=(\bm{\zeta},-\frac{\bm{\zeta}\times\bm{\mu}}{\left|\bm{\zeta}\right|^{2}})
\end{aligned}
\label{eq:psi}
\end{equation}
 where $\bm{x}\cdot\bm{y}=0$ and we also set $\left|\bm{\zeta}\right|^{2}=1$
for convenience. The corresponding Poisson structure is found by computing
$MB_{R}M^{T}$ where $M=\partial_{(\bm{\zeta},\bm{\mu})}\psi$ is
the Jacobian of (\ref{eq:psi}). The new Poisson structure is given
by 
\[
B_{N}=\begin{pmatrix}\bm{0} & id-\bm{x}\bm{x}^{t}\\
-id+\bm{x}\bm{x}^{t} & a\hat{\bm{x}}+\widehat{\bm{x}\times\bm{y}}
\end{pmatrix}.
\]
Note that this is the usual Poisson structure for the Neumann system
with an added magnetic potential arising from $a\hat{\bm{x}}$. Introduce
spherical coordinates $(\eta,\phi)$ on $S^{2}$ by 
\[
\begin{aligned}(x_{1},x_{2},x_{3}) & =(\sqrt{1-\eta^{2}}\cos\phi,\sqrt{1-\eta^{2}}\sin\phi,\eta)\end{aligned}
\]
 and the $\bm{y}$ are determined by the corresponding cotangent lift,
i.e. 
\[
(y_{1},y_{2},y_{3})=(\frac{\eta\left(\eta^{2}-1\right)p_{\eta}\cos(\phi)-p_{\phi}\sin(\phi)}{\sqrt{1-\eta^{2}}},\frac{\eta\left(\eta^{2}-1\right)p_{\eta}\sin(\phi)+p_{\phi}\cos(\phi)}{\sqrt{1-\eta^{2}}},\left(1-\eta^{2}\right)p_{\eta}).
\]
The inverse of this transformation is given by 
\begin{equation}
\begin{aligned}(\eta,\phi,p_{\eta},p_{\phi}) & =(x_{3},\arctan(\frac{x_{2}}{x_{1}}),\frac{y_{3}}{1-x_{3}^{2}},x_{2}y_{1}-x_{1}y_{2}).\end{aligned}
\label{eq:change of var 2}
\end{equation}
Computing $M_{2}B_{N}M_{2}^{T}$ where $M_{2}$ is the Jacobian of
(\ref{eq:change of var 2}) gives the new Poisson structure 
\[
B_{2}=\left(\begin{array}{cccc}
0 & 0 & 1 & 0\\
0 & 0 & 0 & 1\\
-1 & 0 & 0 & a\\
0 & -1 & -a & 0
\end{array}\right).
\]
The final change of variables $\widetilde{p_{\phi}}=p_{\phi}+a\eta$
gives the standard Poisson structure. 
} 


\rem{
\item Redefine the ``Harmonic Lagrange top'' which is the canonical Lagrange
top with an added harmonic potential with arbitrary coefficient:
\begin{equation}
\mathcal{H}=\frac{\left\langle \bm{\mu},\bm{\mu}\right\rangle }{2I_{1}}-\frac{I_{3}-I_{1}}{I_{1}I_{3}}\frac{a^{2}}{2}+\chi\zeta_{3}+k\zeta_{3}^{2}.\label{eq:H harmonc}
\end{equation}
\item Using the symplectic coordinates defined in the previous section we
rewrite (\ref{eq:H harmonc}) as 
\[
\tilde{\mathcal{H}}\coloneqq I_{1}\mathcal{H}=p^{2}(1-q^{2})+\frac{a^{2}+m^{2}-2amq}{1-q^{2}}+c_{0}+c_{1}q+c_{2}q^{2}
\]
 where $c_{0}=2(\frac{-a^{2}}{2I_{3}}(I_{3}-I_{1})-\frac{a^{2}}{2}),c_{1}=2\chi I_{1},c_{2}=2\chi k$. 

\item Using Weyl quantisation $p\to\frac{1}{i}\frac{\partial}{\partial q}$
we obtain 
\[
\hat{\mathcal{H}}\psi=\left(-\frac{\partial}{\partial q}(1-q^{2})\frac{\partial}{\partial q}+\frac{a^{2}+m^{2}-2amq}{1-q^{2}}+c_{0}+c_{1}q+c_{2}q^{2}\right)\psi
\]
} 
 
 \rem{
\item Shift poles by change of dependent variable $q\to\frac{q+1}{2}$.
Alter roots of indicial equation with $\psi=\exp(\frac{2\sqrt{c_{2}}}{\hbar})t^{\left|\frac{a+m}{2\hbar}\right|}(t-1)^{\left|\frac{a-m}{2\hbar}\right|}W$.
\item New ODE:
\[
W^{''}+\left(\frac{\left|\frac{a-m}{\hbar}\right|+1}{t-1}+\frac{\left|\frac{a+m}{\hbar}\right|+1}{t}+\frac{4\sqrt{c_{2}}}{\hbar}\right)W^{'}+\frac{Xt+Y}{2t(t-1)\hbar^{2}}=0
\]
 where $X=4\sqrt{c_{2}}\hbar\left|\frac{a-m}{\hbar}\right|+4\sqrt{c_{2}}\hbar\left|\frac{a+m}{\hbar}\right|+8\sqrt{c_{2}}\hbar-4c_{1}$
and $Y=a^{2}-2\text{c0}+m^{2}+2\lambda+\hbar^{2}\left|\frac{a-m}{\hbar}\right|+\hbar^{2}\left|\frac{a+m}{\hbar}\right|+\hbar^{2}\left|\frac{a-m}{\hbar}\right|\left|\frac{a+m}{\hbar}\right|+2c_{1}-4\sqrt{c_{2}}\hbar\left|\frac{a+m}{\hbar}\right|-4\sqrt{c_{2}}\hbar-2c_{2}.$ 
\item Roots of indicial equation are $0,-\left|\frac{a+m}{\hbar}\right|$
at $t=0$ and $0,\left|\frac{a-m}{\hbar}\right|$ at $t=1$. 
} 

\rem{
\item Three term recurrence:
\[
Ac_{k+1}+Bc_{k}+Cc_{k-1}=0
\]
\[
\begin{aligned}A & =-2(k+1)\hbar^{2}\left(\left|\frac{a+m}{\hbar}\right|+k+1\right)\\
B & =a^{2}+|a+m|\left(-4\sqrt{c_{2}}+2k\hbar+\hbar\right)+|a-m|(|a+m|+2k\hbar+\hbar)-2c_{0}+2c_{1}-8\sqrt{c_{2}}k\hbar-4\sqrt{c_{2}}\hbar-2c_{2}+2k^{2}\hbar^{2}+2k\hbar^{2}+2\lambda+m^{2}\\
C & =4\sqrt{c_{2}}\hbar\left|\frac{a-m}{\hbar}\right|+4\sqrt{c_{2}}\hbar\left|\frac{a+m}{\hbar}\right|-4c_{1}+8\sqrt{c_{2}}k\hbar
\end{aligned}
\]
} 

\rem{
\section{The Momentum Map and Critical Values}

We now reduce to one degree of freedom by quotienting by $\mu_{3}$.
Invariants of this symmetry are 
\[
(z_{1},z_{2},z_{3},z_{4},z_{5},z_{6})\coloneqq(\frac{\ell_{1}^{2}+\ell_{2}^{2}}{2},\frac{n_{1}^{2}+n_{2}^{2}}{2},n_{1}\ell_{1}+n_{2}\ell_{2},\ell_{1}n_{2}-\ell_{2}n_{1},\ell_{3},n_{3}).
\]
 These invariants have the following closed Poisson structure 
\[
B_{\mu}=\left(\begin{array}{cccccc}
0 & z_{4}z_{6} & z_{4}z_{5} & 2z_{1}z_{6}-z_{3}z_{5} & 0 & -z_{4}\\
-z_{4}z_{6} & 0 & 0 & -2z_{2}z_{6} & 0 & 0\\
-z_{4}z_{5} & 0 & 0 & -2z_{2}z_{5} & 0 & 0\\
z_{3}z_{5}-2z_{1}z_{6} & 2z_{2}z_{6} & 2z_{2}z_{5} & 0 & 0 & -2z_{2}\\
0 & 0 & 0 & 0 & 0 & 0\\
z_{4} & 0 & 0 & 2z_{2} & 0 & 0
\end{array}\right)
\]
 with Casimirs $\mathcal{C}_{1}=\left|\bm{n}\right|^{2},\mathcal{C}_{2}=\ell_{3},\mathcal{C}_{3}=\frac{1}{2}\left(z_{3}^{2}+z_{4}^{2}\right)-2z_{1}z_{2}=0$
and $\mathcal{C}_{4}=\bm{n}\cdot\bm{\ell}=a$. For a fixed value of
$b=\ell_{3}$ we pick the following $3$ invariants $(x,y,z)=(n_{3},\ell_{1}n_{2}-\ell_{2}n_{1},\frac{\ell_{1}^{2}+\ell_{2}^{2}}{2}).$
These have the following Poisson structure

\[
B_{F}=\left(\begin{array}{ccc}
0 & 1-x^{2} & y\\
x^{2}-1 & 0 & b(a-bx)-2xz\\
-y & 2xz-b(a-bx) & 0
\end{array}\right)
\]
with Casimir 
\[
S=\frac{1}{2}(a-bx)^{2}-\left(1-x^{2}\right)z+\frac{y^{2}}{2}=0.
\]
In the reduced coordinates $\mathcal{H}$ can be expressed as 
\[
\mathcal{H}=\frac{b^{2}-a^{2}}{2I_{1}}+\frac{a^{2}}{2I_{3}}+\frac{z}{I_{1}}+kx^{2}+\chi x.
\]
} 

\bibliography{all.bib,hd.bib}
\bibliographystyle{plain}

\end{document}